\documentclass[12pt]{article}
\usepackage[utf8]{inputenc}
\usepackage{amsmath}
\usepackage{graphicx}
\usepackage{amsthm}
\usepackage{natbib}
\usepackage{amssymb}
\usepackage{dsfont}
\usepackage{todonotes}
\usepackage{comment}
\usepackage{multirow}
\usepackage{booktabs}
\usepackage{caption}
\usepackage{pdfpages}
\usepackage{float}
\usepackage{placeins}
\usepackage{bm}
\usepackage{soul}

\numberwithin{equation}{section}
\theoremstyle{definition}

\theoremstyle{remark}

\newtheorem {example}{Example}[section]
\theoremstyle{plain}
\newtheorem{theorem}{Theorem}[section]

\newtheorem{assumption}{Assumption}

\usepackage{hyperref}
\usepackage{cleveref}



\title{Nonparametric Statistical Inference for Multivariate Niche Overlap}

\author{
Jonas Beck\\
\small Department of Artificial Intelligence and Human Interfaces\\
\small Paris Lodron University of Salzburg\\
\small Hellbrunner Straße 34, 5020 Salzburg, Austria
\and
Solomon W. Harrar\\
\small Dr. Bing Zhang Department of Statistics\\
\small University of Kentucky\\
\small 725 Rose Street, Lexington, KY 40536, USA\\
\small Corresponding author: solomon.harrar@uky.edu
}

\date{\today}

\begin{document}
\maketitle

\begin{abstract}
In ecological studies niche overlap is often used to quantify species interaction and dynamics. 
This paper develops a robust, nonparametric statistical framework for quantifying and analyzing multivariate niche overlap. 
Parametric methods are often constrained by restrictive assumptions and tend to underperform in complex multivariate settings. We introduce a nonparametric overlap index and propose  estimators for it. Further, we 
investigate asymptotic properties of the estimators. We also propose bootstrap-based inference procedures that enable statistical testing and simultaneous confidence intervals in small sample settings. Extensive numerical examples demonstrate that our proposed methods maintain correct size and exhibit robust power across various scenarios. We illustrate the practical utility of our methodology using stable isotope measurements from multiple fish species and provide distinct ecological insights regarding species niche differentiation. 
\end{abstract}

\section{Introduction}
Understanding the concept of niche overlap is essential for studying species interactions and ecological dynamics. In ecology, a species’ niche refers to its role and requirements in an ecosystem –- famously described as an "n-dimensional hypervolume" of environmental conditions and resources \cite{hutchinson1957population}.  Niche overlap describes the extent to which two (or more) species use the same resources or environmental conditions, essentially the portion of their niches that they share. 

When two species heavily overlap in resource use, they essentially compete for the same limiting factors. Ecologists often assume that the intensity of competition between species is proportional to their niche overlap. A high degree of niche overlap implies that each species reduces the resources available to the other, lowering each other’s growth or fitness. If niche overlap is complete (i.e., two species share almost identical niches), the classic competitive exclusion principle predicts that they cannot coexist stably \cite{schoener, chesson}. 
Beyond their variety of applications in ecology, the concept of niche overlap is also used in fields such as economics and marketing \citep{milne, dimmick}, as well as sociology and human geography \citep{Audia2006OrganizationalFI, freeman, hannan}.

To model ecological niches, we use different distributions for each species. To quantify niche overlap, we assess the intersection of the corresponding distribution functions.
Despite its importance, traditional methods have focused predominantly on overly simplistic, not always justifiable parametric models \cite{swanson_15, parra} or do not allow any statistical inference \cite{juncker, blonder2018new} or are restricted to univariate cases \citep{langthaler}, limiting their applicability in complex real-world scenarios where multiple factors interact simultaneously.

This paper extends the niche overlap framework developed by \citet{ParkinsonKutilKupplerJunkerTrutschnigBathke+2018} and \citet{langthaler} to the multivariate domain, enabling a more comprehensive analysis of ecological data.
We propose a robust  statistical method for estimating multivariate niche overlap. More specifically, we introduce a nonparametric overlap index, provide a consistent estimation, investigate its asymptotic properties, and develop resampling-based approaches for hypothesis testing and confidence interval construction. This methodological advancement provides a solid foundation for analyzing more complex ecological settings involving multiple species and multivariate data, which we discuss in subsequent sections.

The remainder of the paper is organized as follows: in Section \ref{sec:3} we present the main theoretical framework for the multi-sample setting. The special case of two samples is provided in the Supplement (Sections \ref{sec:2}–\ref{s3}). We present extensive simulations to assess the numerical  performance and robustness of our proposed methods in Section \ref{sim}.  In Section \ref{casestudy}, we illustrate our approach with a real-world ecological dataset, demonstrating its effectiveness in capturing niche overlap in a multivariate setting.  Finally, we conclude the paper with some  discussion in Section \ref{discussion}.
\subsection{Motivating Example} \label{mot_ex}
We motivate our method with a dataset collected by \citet{swanson_15}. The dataset involves stable isotope measurements from muscle tissues of four fish species: Lake Whitefish (Coregonus clupeaformis, LKWF) ($n_1=69$), Broad Whitefish (Coregonus nasus\, BDWF) ($n_2=71$), Arctic Cisco (Coregonus autumnalis, ARCS) ($n_3=67$) and Least Cisco (Coregonus sardinella, LSCS)($n_4=70$). Here $n_i$ refers to the sample size in each group. These fishes were sampled from Phillips Bay, Beaufort Sea, Canada. Each individual fish was analyzed for three stable isotope ratios: Carbon isotope ratio ($\delta^{13}\mathrm{C}$, D13C), Nitrogen isotope ratio ($\delta^{15}\mathrm{N}$, D15N) and  Sulfur isotope ratio ($\delta^{34}\mathrm{S}$, D34S). A boxplot of the data is shown in Figure \ref{fig:boxplot_fish}. A major question of interest here is which species share similar dietary and habitat characteristics. Quantifying niche overlap among these species is therefore essential to assess the extent to which they exploit similar resources and habitats.

\begin{figure}[h]
    \centering
    \includegraphics[width=\textwidth]{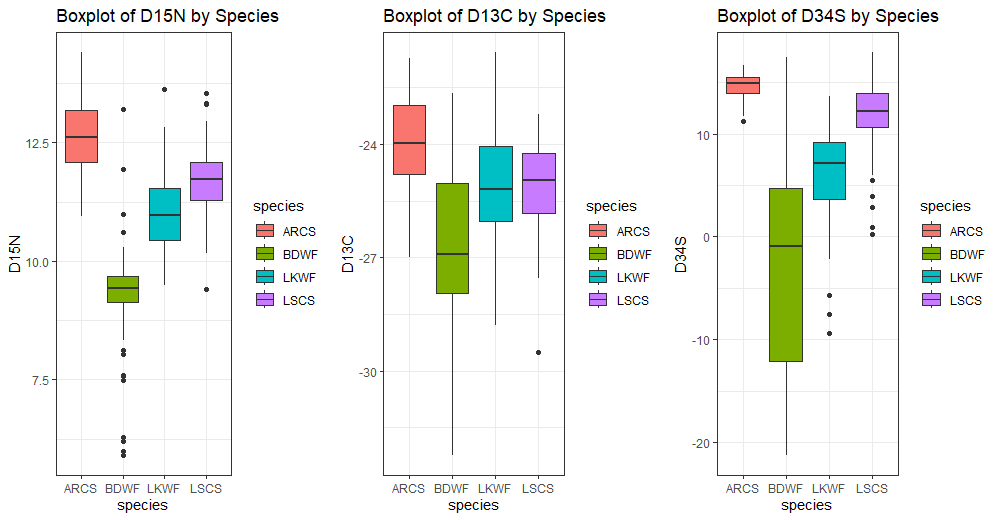}
    \caption{Boxplots of D15N, D13C, and D34S by Species}
    \label{fig:boxplot_fish}
\end{figure}
The authors define ecological niches probabilistically, assuming that the stable isotope measurements for each species follow a multivariate normal distribution with different parameters. They employed Bayesian methods to estimate these parameters and derive posterior distributions to represent statistical uncertainty. However, the assumption of multivariate normality does not seem justifiable. For example, the Henze-Zirkler Test \cite{Henze1990ACO} shows highly significant p-values of $p < 0.0001$. The Q-Q plots in Figure \ref{fig:qqplot_fish} in the Supplement confirm  this finding. Additional QQ-plots and exploratory analyses are provided in the Supplement (Section \ref{s4}).

\section{Statistical Methods} \label{sec:3}
We now introduce the general statistical framework for quantifying niche overlap 
in a multivariate setting involving multiple species or populations. 
Our aim is to define and estimate suitable indices of niche overlap based on these distributions 
and to develop an accompanying inference framework.

\subsection{Statistical Model}

\label{sec:setting}
For the remainder of this section, let \( \mathbf{F}_1, \ldots, \mathbf{F}_k \) be absolutely continuous, non-degenerate multivariate cumulative distribution functions on \(\mathbb{R}^d\). Let
\[
\mathbf{X}_{i1}, \mathbf{X}_{i2}, \ldots, \mathbf{X}_{in_i} \overset{\rm i.i.d.}{\sim} \mathbf{F}_i,  
\]
for $i = 1, \ldots, k$
be independent samples from these distribution functions, where
$
\mathbf{X}_{ik} = (X_{ik}^{(1)}, \ldots, X_{ik}^{(d)})^\top . 
$
The marginal distributions of $ \mathbf{F}_i$  are denoted by $F_i^{(1)}, \ldots, F_i^{(d)}$.
We define \( N = \sum_{i = 1}^{k} n_i \). Define the reference distribution 
\[ \mathbf{H}(t) = \sum_{i=1}^{k} \lambda_i \mathbf{F}_i(t), \]
where \( \sum_{i = 1}^{k} \lambda_i = 1 \) and \( \lambda_i \geq 0, \, i = 1, \ldots, k \) \cite{brunner17}. The marginal distributions of $\mathbf{H}$ are denoted by $H^{(1)}$, \ldots, $H^{(d)}$.

We define the niche overlap of \( \mathbf{F}_i \) with respect to \( \mathbf{H} \) as
\begin{align*}
    \mathcal{I}(\mathbf{H},\mathbf{F}_i)= (I(H^{(1)},F_i^{(1)}), \ldots, I(H^{(d)},F_i^{(d)})),
\end{align*} 
where 
\begin{align*}
      I(H^{(s)},F_i^{(s)}) = P(X_{i,\text{low}}^{(s)} < Z^{(s)} < X_{i,\text{up}}^{(s)}), 
\end{align*}
$X_{i,\text{low}}^{(s)}\sim F_{i, {\rm low}}^{(s)}$ independently of $X_{i,\text{up}}^{(s)}\sim F_{i,{\rm up}}^{(s)}$, $F_{i, {\rm low}}^{(s)}$ is defined as the conditional distribution of $X_i^{(s)}$ on being smaller than its own median and $F_{i, {\rm up}}^{(s)}$ as the conditional distribution on being larger than its own median. The variable $Z^{(s)}$ is drawn from the reference distribution function $H^{(s)}$.

Thus, for each component $s$, the index $I\!\bigl(H^{(s)},F_i^{(s)}\bigr)$ is large when $F_i^{(s)}$ places most of its mass around the central region of $H^{(s)}$ (near its median), and small when $F_i^{(s)}$ is concentrated far below or above the median of $H^{(s)}$. Equivalently, it is the probability that a draw $Z^{(s)}\sim H^{(s)}$ falls between two independent draws from $F_i^{(s)}$—one conditioned to lie below its median and one above.

The quantity $I(H,F_i)$ is defined as a $d$-dimensional vector of marginal
overlap indices $I(H^{(s)},F_i^{(s)})$, $s=1,\ldots,d$.
Thus, the parameter is intentionally component-wise and does not aim to
summarize similarity of the full joint distributions by a single scalar
overlap measure. The advantage of this construction lies in its
interpretability in terms of individual niche dimensions, for example, to
identify which variables primarily drive niche overlap or differentiation.
The multivariate structure enters through joint statistical inference for the
$kd$-dimensional parameter vector, where dependence across components is
reflected in the asymptotic covariance matrix $\Sigma$ in Theorem~2.1.
Hypothesis based on marginal distributions naturally arise in some multivariate inference problems.  The most commonly occurring example is inference about mean vectors when higher moments or joint distributions of the groups are not necessarily the same (\cite{anderson2003introduction}, p. 187). In  multivariate analysis this problem is referred to as the Behrens-Fisher Problem.  See also \cite{brunner_munzel} and \cite{brunner2002multivariate}  for a nonparametric version.

The overlap index can be rewritten  \citep{ParkinsonKutilKupplerJunkerTrutschnigBathke+2018, langthaler} as
\begin{align*}
    I(H^{(s)},F_i^{(s)}) &= \int_0^1 H^{(s)} \circ (F_i^{(s)})^{-1}(1 - \alpha/2) \, d\alpha - \int_0^1 H^{(s)} \circ (F_i^{(s)})^{-1}(\alpha/2) \, d\alpha \nonumber \\
    &= 2 \left( \int_{(F_i^{(s)})^{-1}(1/2)}^{\infty} H^{(s)} \, dF^{(s)}_i - \int^{(F^{(s)}_i)^{-1}(1/2)}_{-\infty} H^{(s)} \, dF^{(s)}_i \right)
\end{align*}
for $s=1, \ldots, d$. The first term equation here shows the close relation to the receiver operating characteristic curve (ROC-curve), given by $H^{(s)} \circ \bigl(F_i^{(s)}\bigr)^{-1}(u)$.

Here, we focus on the weighted niche overlap, by defining $\lambda_i:= n_i/N$. Of course, other weighting schemes would be possible. For example the unweighted niche overlap $\lambda_i=1/k$, which leads to another estimator \cite{brunnerbathkekonnietschke}. The weights $\lambda_i$ define the reference distribution 
$H = \sum_{i=1}^k \lambda_i F_i$ and are therefore part of the target parameter 
$I(H,F_i)$. The proposed estimation and inference framework remains valid for any
given weighting scheme $\lambda$. Different weighting schemes correspond
to different scientific questions: while $\lambda_i = 1/k$ yields a species-level
reference that can be useful when all groups are to be treated symmetrically,
$\lambda_i = n_i/N$ defines a pooled, individual-level reference that reflects the
composition of the observed community. This interpretation is particularly
appropriate when the sampling design is intended to approximate the underlying
population structure (e.g., relative abundance or availability). Hence, differences
across weighting schemes should be interpreted as differences in the underlying
reference population rather than as methodological discrepancies.

The framework can also be formulated in the classical two-sample setting, which 
serves as a special case of the above. A brief summary is given here, while full details 
(statistical model, estimation, asymptotics, resampling, and simulations) are deferred 
to the Supplement (Sections \ref{sec:2}–\ref{s3}).

Due to the linearity of \( \mathcal{I}(\cdot, \cdot) \) in its first argument, we can write the \( k \)-sample niche overlap as a convex combination of all pairwise overlaps with \( \mathbf{F}_i \), i.e.,
\[
\mathcal{I}(\mathbf{H},\mathbf{F}_i) = \sum_{j=1}^k \lambda_j \mathcal{I}(\mathbf{F}_j,\mathbf{F}_i),
\]
where the pairwise overlap of \( \mathbf{F}_i \) with respect to \( \mathbf{F}_j \) is defined as
\begin{align*}
    \mathcal{I}(\mathbf{F}_j,\mathbf{F}_i)= (I(F_j^{(1)},F_i^{(1)}), \ldots, I(F_j^{(d)},F_i^{(d)})),
\end{align*}
with
\begin{align*}
      I(F_j^{(s)},F_i^{(s)}) = P(X_{i,\text{low}}^{(s)} < X_j^{(s)} < X_{i,\text{up}}^{(s)}). 
\end{align*}
This equality would not hold if we switch the role of $\mathbf{H}$ and $\mathbf{F}_i$.

For more than two samples, one may consider the pairwise niche overlap. However, these do not correspond to any notion of \emph{overall} overlap between one distribution and the other distributions. In the context of hypothesis testing, this  would require multiplicity adjustments for \( k \times (k-1) \) hypothesis tests, instead of only \( k \), where $k$ is the number of samples.  In the literature on nonparametric effects, the approach of comparing all \( k \) distributions to the same reference distribution has been very effective and also avoids paradoxical results due to nontransitivity \citep{Brown2002KruskalWallisMC, Thangavelu2007WilcoxonMannWhitneyTF, brunner17}. We therefore adopt this approach for the multivariate niche overlap.

Since the two-sample setting is a special case of the general framework, we focus on the multi-sample case in the following. Full details for the two-sample theory (definitions, asymptotics, resampling, and simulations) are provided in the Supplement (Sections \ref{sec:2}–\ref{s3}).

Similar as in the univariate case not all values in the unit cube $ [0,1]^d$ are attainable. As can be easily seen the set of possible values for the d-dimensional niche overlap is  
$
    \prod_{s=1}^d \Bigl [\frac{1}{2 n_i}, 1- \frac{1}{2 n_i} \Bigl ].$ 

Note that $F_j^{(s)}=F_i^{(s)}$ for all $i,j$ implies $I\!\bigl(H^{(s)},F_i^{(s)}\bigr)=\tfrac12$, while the converse need not hold. 
Further, for any two distributions,
\[
0 \;\le\; I\!\bigl(H^{(s)},F_i^{(s)}\bigr)+I\!\bigl(F_i^{(s)},H^{(s)}\bigr) \;\le\; 1,
\]
and the upper bound is attained when the medians of $H^{(s)}$ and $F_i^{(s)}$ coincide \citep[][Lemma~2.10]{ParkinsonKutilKupplerJunkerTrutschnigBathke+2018}.

\subsection{Estimation} \label{sec:estmore}
We define the empirical versions of the marginal cumulative distribution functions as
\begin{align*}
     \hat{F}^{(s)}_{i,n_i}(t) := \frac{1}{n_i} \sum_{r = 1}^{n_i} \mathbf{1}_{(-\infty,t]}(X_{i,r}^{(s)})\quad \text{ and}\quad \hat{F}_{i, n_i}(t) = (\hat{F}^{(1)}_{i,n_i}(t), \ldots, \hat{F}_{i,n_i}^{(d)}(t))^\top. 
     \end{align*}
     The empirical version of the reference distribution is defined by
     \begin{align*}
     \hat{H}_{N}(t) = \sum_{i = 1}^{k} \lambda_i \hat{F}_{i,n_i}(t).
\end{align*}

Denote by  \( R_{i,r}^{(s)} \)  the rank of \( X_{i,r}^{(s)} \) in the combined samples for the $s$-th component, and \( R_{i, r}^{(i)(s)} \) the rank of \( X_{i, r}^{(s)} \) in the \( i \)-th sample for the $s$-th component, i.e.,
\begin{align*}
    R_{i,r}^{(s)} = \sum_{j = 1}^{k} \sum_{r' = 1}^{n_j} \mathbf{1}_{(-\infty,X_{i,r}^{(s)}]}(X_{j,r'}^{(s)}) \quad \text{and} \quad R_{i,r}^{(i)(s)} := \sum_{r' = 1}^{n_i} \mathbf{1}_{(-\infty,X_{i,r}^{(s)}]}(X_{i, r'}^{(s)}).
\end{align*}

We estimate the multi-sample indices by plugging empirical marginals into the definition
of $I(H^{(s)},F_i^{(s)})$. The plug-in estimator for group $i$ is
\begin{align*}
    \widehat{\mathcal{I}}_{N,n_i}(\mathbf{H},\mathbf{F}_i) = (I(\hat{H}_{N}^{(1)}, \hat{F}_{i, n_i}^{(1)}), \ldots, I(\hat{H}_{N}^{(d)}, \hat{F}_{i, n_i}^{(d)}))^T,
\end{align*}
with componentwise form
\begin{align*}
I\bigl(\hat H^{(s)}_N,\hat F^{(s)}_{i,n_i}\bigr)
&=\int_0^1 \hat H^{(s)}_N\!\circ\!(\hat F^{(s)}_{i,n_i})^{-1}(1-\alpha/2)\,d\alpha
   \;-\;\int_0^1 \hat H^{(s)}_N\!\circ\!(\hat F^{(s)}_{i,n_i})^{-1}(\alpha/2)\,d\alpha\\
&=2\!\left(\int_{(\hat F^{(s)}_{i,n_i})^{-1}(1/2)}^{\infty} \hat H^{(s)}_N\,d\hat F^{(s)}_{i,n_i}
          \;-\;\int_{-\infty}^{(\hat F^{(s)}_{i,n_i})^{-1}(1/2)} \hat H^{(s)}_N\,d \hat F^{(s)}_{i,n_i} \right ),
          \end{align*}
          where $(\widehat F^{(s)}_{i,n_i})^{-1}(u):=\inf\{y:\widehat F^{(s)}_{i,n_i}(y)\ge u\}$ is the empirical quantile.
By the componentwise consistency of the two-sample estimators and because $d$ and $k$ are finite,
$\widehat{\mathcal I}_{N,n_i}(H,F_i)$ is strongly consistent for $\mathcal I(H,F_i)$ for each $i=1,\ldots,k$.

To express these estimators in terms of ranks we reorder the observations within each sample such that $X_{i,1}^{(s)} \leq X_{i,2}^{(s)} \leq \ldots \leq X_{i,n_i}^{(s)},\, i = 1,\ldots,k$ and define $l_i =\lceil n_i/2 \rceil $. 
Using ranks
we can rewrite 
\begin{align*}
   & I(\hat{H}^{(s)}_{N},\hat{F}^{(s)}_{i,n_i}) 
    = \frac{2}{n_i N} \Bigl (\sum_{r=l_i+1}^{n_i} R_{i,r}^{(s)} - \sum_{r = l_i +1}^{n_i} R_{i,r}^{(i)(s)} - \sum_{r=1}^{c_i} R_{i,r}^{(s)} + \sum_{r=1}^{c_i} R_{i,r}^{(i)(s)} \Bigl ),
\end{align*}
where $c_i=l_i$ when $n_i$ is even and $c_i=l_i-1$ when  $n_i$ is odd. For the derivation we refer to \ref{A3}.

\subsection{Asymptotic Distribution and Resampling}
In this subsection we exploit the analogy to ROC curves, given by
$H^{(s)} \circ \bigl(F_i^{(s)}\bigr)^{-1}(u)$, to derive the asymptotic distribution of the multi-sample overlap estimator. This analogy is used for distributional comparison and not in the sense of classical ROC
classification theory.

To formulate the main theorem, we first need the following assumptions.
\begin{assumption} \label{assum_3}
The ratio of the sample sizes \( N/n_i \rightarrow \tau_i \in (1,\infty)\) for each \(i = 1, \ldots, k \).  
\end{assumption}
The slope (derivative) of the ROC curve $H^{(s)} \circ \bigl(F_i^{(s)}\bigr)^{-1}(u)$ is
\[
\frac{d}{du}\Bigl[H^{(s)} \circ \bigl(F_i^{(s)}\bigr)^{-1}(u)\Bigr]
= \frac{h^{(s)}\!\bigl((F_i^{(s)})^{-1}(u)\bigr)}{f_i^{(s)}\!\bigl((F_i^{(s)})^{-1}(u)\bigr)},
\]
where $h^{(s)}$ and $f_i^{(s)}$ denote the corresponding densities.
\begin{assumption} \label{assum_4}
The slope of the ROC, $H^{(s)} \circ {F_i^{(s)}}^{-1}(u)$, is bounded on  the interval $(a, b)$  for any $0<a < b<1$ and any $s=1,\ldots,d$.
\end{assumption}
Assumption~\ref{assum_3} is a standard proportional-growth condition ensuring comparable
contributions from each group. Assumption~\ref{assum_4} is equivalent to requiring that the
ratio $h^{(s)}((F_i^{(s)})^{-1}(u))/f_i^{(s)}((F_i^{(s)})^{-1}(u))$ be finite for all
$u\in(0,1)$. Under Assumptions~\ref{assum_3}–\ref{assum_4}, the plug-in estimator of the ROC
curve is strongly consistent \citep{hsieh_turnbull}.
Before we can state the main theorem we have to define some notation regarding the bootstrap: 

For each group $i=1,\dots,k$, draw a bootstrap sample of $d$-dimensional vectors
$X_{i,1}^*,\dots,X_{i,n_i}^*$ i.i.d.\ from the empirical distribution
$\widehat F_{i,n_i}$ on $\mathbb{R}^d$, i.e.\ by sampling with replacement from the observed
vectors $\{X_{i,1},\dots,X_{i,n_i}\}$.
For each component $s=1,\dots,d$, define the corresponding marginal empirical CDF from
the resampled vectors by
\[
\widehat F^{(s)*}_{i,n_i}(t)
:= \frac{1}{n_i}\sum_{r=1}^{n_i}\mathbf 1\{X^{(s)*}_{i,r}\le t\},
\qquad
\widehat H_N^{(s)*}(t)
:=\sum_{j=1}^k \lambda_j\,\widehat F^{(s)*}_{j,n_j}(t).
\]
The bootstrap plug-in estimator for group $i$ is then
\[
\widehat{\mathcal I}^{*}_{N,n_i}(H,F_i)
=
\bigl(
I(\widehat H_N^{(1)*},\widehat F^{(1)*}_{i,n_i}),
\dots,
I(\widehat H_N^{(d)*},\widehat F^{(d)*}_{i,n_i})
\bigr)^\top,
\]
and stacking over $i$ yields $\widehat{\mathcal I}^{*}_{N}(H,F_{1,\dots,k})$ analogously.

For brevity, we define:
\begin{align*}
 &\mathcal{I}(\mathbf{H},\mathbf{F}_{1,\ldots,k}) := \left(   \mathcal{I}(\mathbf{H},\mathbf{F}_1), \ldots , \mathcal{I}(\mathbf{H},\mathbf{F}_k) \right)^T,  \\
&\widehat{\mathcal{I}}_N(\mathbf{H},\mathbf{F}_{1,\ldots,k}) := \left( \widehat{\mathcal{I}}_{N,n_1}(\mathbf{H},\mathbf{F}_1), \ldots , \widehat{\mathcal{I}}_{N,n_k}(\mathbf{H},\mathbf{F}_k) \right)^T \text{ and } \\
&  \widehat{\mathcal{I}}^*_N(\mathbf{H},\mathbf{F}_{1,\ldots,k}) := \left( \widehat{\mathcal{I}}^*_{N,n_1}(\mathbf{H},\mathbf{F}_1), \ldots , \widehat{\mathcal{I}}^*_{N,n_k}(\mathbf{H},\mathbf{F}_k) \right)^T .
\end{align*}

We now state the main theorem.
\begin{theorem}
    \label{thm:asym:2}
  Under Assumptions \ref{assum_3} and \ref{assum_4} it follows that the vector 
     \begin{align}\label{eq:conv,mult}
         \sqrt{N} \biggl ( \widehat{\mathcal{I}}_N(\mathbf{H},\mathbf{F}_{1,\ldots,k}) -\mathcal{I}(\mathbf{H},\mathbf{F}_{1,\ldots,k}) \biggl )
     \end{align}
   is asymptotically normal distributed with mean vector $\textbf{0}_{kd} \in \mathbb{R}^{kd} $ and nonnegative definite covariance matrix $\Sigma$. 
   Additionally, the bootstrapped version 
  
 \begin{align} \label{eq: boot,mult}
     \sqrt{N} \biggl ( \widehat{\mathcal{I}}^*_N(\mathbf{H},\mathbf{F}_{1,\ldots,k}) - \widehat{\mathcal{I}}_N(\mathbf{H},\mathbf{F}_{1,\ldots,k}) \biggl )
 \end{align}
   converges in outer probability conditionally on the data to the same limit distribution as that of \eqref{eq:conv,mult}.
   \begin{proof}
See \ref{A4}
 \end{proof}
\end{theorem}
Although Theorem \ref{thm:asym:2} establishes asymptotic normality with covariance matrix $\Sigma$,
closed-form expressions for the entries of $\Sigma$ (obtainable via covariances of Brownian bridges arising in the ROC-type mapping) are algebraically cumbersome and of limited
practical value. Accordingly, in all implementations we approximate the sampling
distribution and estimate $\Sigma$ by a nonparametric bootstrap of the empirical distribution on $\mathbb{R}^{d}$,
using the resulting covariance estimate $\widehat{\Sigma}^{\,*}$ (or bootstrap quantiles) for
test statistics and simultaneous confidence intervals.

\subsection{Hypothesis Testing and Confidence Intervals} \label{sec3.4}
In this section we will use the asymptotic results in  Theorem \ref{thm:asym:2} for statistical inference.
As previously stated, for equal distributions, $\mathbf{H}=\mathbf{F}_i$ for each $i=1, \ldots,k$, we have \( I(\mathbf{H}, \mathbf{F}_i)=\frac{1}{2}\mathbf{1}_{d} \). Therefore, here we also use \( \frac{1}{2} \) as a benchmark value for testing equality of overlap index across the k-groups. We formulate the hypotheses as  
\begin{align*}
H_0 : \mathcal{I}(\mathbf{H},\mathbf{F}_{1,\ldots,k}) =\frac{1}{2} \mathbf{1}_{dk}  \text{ versus }  H_1: \mathcal{I}(\mathbf{H},\mathbf{F}_{1,\ldots,k}) \neq \frac{1}{2} \mathbf{1}_{dk} .
\end{align*}
By Theorem~\ref{thm:asym:2} and the Continuous Mapping Theorem, under
$H_0:\ \mathcal I(H,F_{1,\ldots,k})=\tfrac12\,\mathbf{1}_{dk}$ and under the assumption of a nonsingular covariance matrix the Wald-type quadratic form
     \[
Q_N = N \cdot\ (\widehat{\mathcal{I}}_N(\mathbf{H},\mathbf{F}_{1,\ldots,k})- \frac{1}{2} \mathbf{1}_{dk})^T \Sigma^{-1} (\widehat{\mathcal{I}}_N(\mathbf{H},\mathbf{F}_{1,\ldots,k})-\frac{1}{2} \mathbf{1}_{dk}),
\]
converges in distribution to $\chi^2_{dk}$. Since the bootstrap in Theorem~\ref{thm:asym:2}
is valid, we may replace $\Sigma$ by the bootstrap covariance $\widehat\Sigma^{\,*}$ to obtain the
feasible statistic
      \[
\hat{Q}_N = N \cdot\ (\widehat{\mathcal{I}}_N(\mathbf{H},\mathbf{F}_{1,\ldots,k})- \frac{1}{2} \mathbf{1}_{dk})^T \Sigma^{*^{-1}} (\widehat{\mathcal{I}}_N(\mathbf{H},\mathbf{F}_{1,\ldots,k})-\frac{1}{2} \mathbf{1}_{dk}),
\]
which has the same $\chi^2_{dk}$ limiting distribution. If $\Sigma$ is singular, one may replace $\Sigma^{-1}$ by the Moore--Penrose generalized inverse $\Sigma^{+}$,
leading to a Wald-type statistic based on the effective rank.

It is well documented that the Wald approximation can be liberal in small samples
\citep{cui2021nonparametric,brunner2002multivariate}; our
simulations confirm this. The instability mainly stems from inverting the estimated
covariance matrix in the quadratic form. As a more robust alternative, we adopt an
ANOVA-type statistic (ATS) introduced for parametric models \citep{box} and extended to
univariate and multivariate nonparametric settings
\citep{brunner97,brunner2002multivariate}. The ATS replaces the matrix inverse by the
trace of the covariance estimate, yielding a ratio-of-quadratics similar to classical ANOVA.
Under $H_0$, the statistic
\begin{align*}
F_n &= \frac{N}{\operatorname{tr}(\hat{\Sigma^*})} (\widehat{\mathcal{I}}_N(\mathbf{H},\mathbf{F}_{1,\ldots,k}) - \frac{1}{2} \mathbf{1}_{dk})^{\top} (\widehat{\mathcal{I}}_N(\mathbf{H},\mathbf{F}_{1,\ldots,k}) - \frac{1}{2} \mathbf{1}_{dk}) \\
&= \frac{N}{\operatorname{tr}(\hat{\Sigma^*})} \sum_{l=1}^{d} \sum_{i=1}^{k} \left( \widehat{\mathcal{I}}(H^{(l)},F_i^{(l)}) - \frac{1}{2} \right)^2.
\end{align*}
has an approximate central $F(\hat\nu,\infty)$ distribution with
$\hat\nu=\bigl[\operatorname{tr}(\widehat\Sigma^{\,*})\bigr]^2/
\operatorname{tr}\!\bigl((\widehat\Sigma^{\,*})^{2}\bigr)$; see \citet{brunner2002multivariate}
for details of this approximation. In our simulations, the ATS maintains the nominal level
substantially better than the Wald test in small samples, while exhibiting comparable power.

We now construct confidence regions and simultaneous intervals for $\widehat{\mathcal{I}}_N(\mathbf{H},\mathbf{F}_{1,\ldots,k})$. Since $\widehat{\mathcal{I}}_N(\mathbf{H},\mathbf{F}_{1,\ldots,k})$ is asymptotically normal with covariance
$\Sigma/N$ and the bootstrap is valid (Theorem~\ref{thm:asym:2}),
we get asymptotic \( (1 - \alpha) \)- Simultaneous Confidence Intervals (SCIs) for the niche overlap using
\[
\widehat{\mathcal{I}}_N(\mathbf{H},\mathbf{F}_{1,\ldots,k}) \pm z(1 - \alpha, \Sigma^*)  ,
\]
where $z(1 - \alpha, \Sigma^*)$ is the equi-coordinate multivariate normal quantile which can be  computed using the \textsc{R}-package \textsc{mvtnorm} \citep{mvtnorm, genz_bretz} and the methods described therein.

In small samples, the normal approximation may be inaccurate. A simple alternative uses
coordinatewise bootstrap quantiles of $\sqrt{N}\,(\widehat{\mathcal I}^{\,*}_N(\mathbf{H},\mathbf{F}_{1,\ldots,k})-\widehat{\mathcal I}_N(\mathbf{H},\mathbf{F}_{1,\ldots,k}))$
together with a Bonferroni correction:
 \begin{align*}
 \left[
\widehat{\mathcal{I}}_N(\mathbf{H},\mathbf{F}_{1,\ldots,k})- \frac{1}{\sqrt{N}}    z^*_{*,\alpha/(2kd)},  \widehat{\mathcal{I}}_N(\mathbf{H},\mathbf{F}_{1,\ldots,k})- \frac{1}{\sqrt{N}}    z^*_{*,1-\alpha/(2kd)}
\right]
\end{align*}
where $z^*_{*,1-\alpha/(2kd)}=z^*_{I(\hat{H}, \hat{F}_{1,\ldots,k}), 1-\alpha/(2kd)}$ denotes the empirical bootstrap quantile for each component of the estimator. The Bonferroni-based procedure is conservative, as reflected in our simulations, but is included as a robust finite-sample safeguard against inaccuracies of the asymptotic joint normal approximation.

Furthermore, by the usual duality between hypothesis tests and confidence sets,
inverting the Wald-type test (equivalently, centering it at an arbitrary target vector $\bm v_0$)
yields the elliptical $(1-\alpha)$ confidence region by considering all points
$\bm v=(v_1,\ldots,v_{kd})\in\mathbb{R}^{kd}$ satisfying

\begin{align*}
   \frac{1}{\sqrt{N}} \biggl (\widehat{\mathcal{I}}_N(\mathbf{H},\mathbf{F}_{1,\ldots,k})- {\bm v} \biggl )^T  \Sigma^{*^{-1}}_{N} \biggl( \widehat{\mathcal{I}}_N(\mathbf{H},\mathbf{F}_{1,\ldots,k})- {\bm v} \biggl ) \leq q_{\chi^2_{kd}}(1-\alpha), 
    \end{align*}
    where $q_{\chi^2_{kd}}(1-\alpha)$ is the $1-\alpha $ quantile of a $\chi^2$ distribution with $k  d$ degrees of freedom and  $\Sigma^*_{N}$ is the sample covariance matrix of the bootstrap estimates.  The projection of this region on the individual coordinates gives a simultaneous confidence interval  which we will refer to as elliptical-based confidence intervals. 

\begin{comment}
Due to the asymptotic normality demonstrated in Theorem \ref{thm:asym:2}, we can construct an elliptical \(1-\alpha\) confidence region by considering all \(d\)-dimensional vectors \( {\bf v}=(v_1, \ldots, v_d) \) that satisfy
\[
\left\| 
\left( 
\frac{I(H^{(1)}, F_i^{(1)})}{\sqrt{n}}, \ldots, \frac{I(H^{(d)}, F_i^{(d)})}{\sqrt{n}} 
\right)^T 
- 
\left( 
\frac{v^{(1)}}{\sqrt{n}}, \ldots, \frac{v^{(d)}}{\sqrt{n}} 
\right)^T 
\right\|_2 
\leq 
\Phi_{\Sigma^*}(1 - \alpha)
\]

where \(\Phi_{\Sigma^*}\) denotes the cumulative distribution function of the multivariate normal distribution with mean 0 and covariance matrix \(\Sigma^*\). \(\Sigma^*\) is again the covariance matrix of the bootstrapped sample.

Similar as in Section \ref{sec2.4} we consider again the Wald type statistic with the bootstrapped covariance matrix. We define the Wald-type test statistic as:
  \[
\hat{Q}_{n,i} = n-n_i \cdot\ (\hat{I}(H,F_i)- \frac{1}{2} 1_d)^T \Sigma^{*^{-1}} (\hat{I}(H,F_i)-\frac{1}{2} 1_d).
\]
Therefore this test statistic $\hat{Q}_{n,i}$ is again central $\chi^2$-distributed with $d$ degrees of freedom, due to similar arguments as before.
\end{comment}

Our deduced inference procedure is general and, in particular, it allows to conduct post hoc analysis in a manner similar to \citet{dobler}. When the global null hypothesis $H_0$ of no effect is rejected, it may be of interest to test more specific null hypotheses to find out where the difference in the overlap indices lie. One way to do this is to first test the univariate null hypothesis 
\begin{align*}
    H_0^{(s)} \ : \ I(H^{(s)},F_{1,\ldots,k}^{(s)}) = \frac{1}{2} \mathbf{1}_k
\end{align*}
for each $s=1, \ldots, d$ to detect the univariate endpoint which potentially caused the rejection. Next, one would proceed to test which groups show significant difference
 \begin{align*}
    H_{0,i} \ : \ \mathcal{I}(\mathbf{H},\mathbf{F}_{i}) = \frac{1}{2} \mathbf{1}_d
\end{align*}
for all $i \in 1 ,\ldots, k$. Our results in Theorem \ref{thm:asym:2} can be easily extended to null hypotheses involving contrast matrices by applying the closed testing principle \citep{marcus76}. Note that this approach is not feasible if the global null assumes equality of distributions, as the equality of marginals does not imply equality of joint distributions. However, we leave this extension for future research.

\section{Numerical Examples} \label{sim}

We now present simulation results to assess the performance of our methods. For completeness, additional results for the two-sample case are provided in the Supplement (Section \ref{s2}).
In the following we evaluate the performance of three tests from Section \ref{sec3.4}: (i) the Wald type test (Wald), (ii) the ANOVA-type test (ANOVA), and (iii) the percentile test based on empirical bootstrap quantiles with Bonferroni correction (Percentile). 
For all numerical experiments, bootstrap-based inference was carried out using $B = 2000$ bootstrap resamples.
\begin{example} \label{ex7}
{\bf (Empirical Size)} We assess the empirical size using three multivariate normal distributions ($k=3$) with common mean vector \( \bm{\mu}_{i} = \mathbf{1}\in \mathbb{R}^d \) and  constant covariance matrix \( \Sigma \), where the diagonal entries are equal to $1$ and off-diagonal entries equal to $0.25$. Simulations were performed for dimensions \( d = 2, 3, 4, 5 \) and sample sizes \( n = 50, 100 \).

The Wald and ANOVA tests achieve empirical sizes close to the nominal level, but show noticeable conservatism in some settings, particularly for small samples and higher
dimensions (Table \ref{tab:rejection_rates2}).The strong conservatism of the Percentile test is expected, as it combines coordinatewise
bootstrap quantiles with a Bonferroni correction over $kd$ components, which ignores the
substantial dependence between overlap estimates.
\end{example}

\begin{table}[htb]
    \centering
    \begin{tabular}{|c|c|c|c|c|}
        \hline
        Sample Size & $d$ & Wald Test & ANOVA Test & Percentile Test \\
        \hline
        50  & 2 & 0.023 & 0.029 & 0.003833 \\
        50  & 3 & 0.028 & 0.030 & 0.002222 \\
        50  & 4 & 0.035 & 0.024 & 0.002500 \\
        50  & 5 & 0.029 & 0.018 & 0.001133 \\
        \hline
        100 & 2 & 0.030 & 0.028 & 0.007167 \\
        100 & 3 & 0.036 & 0.041 & 0.005111 \\
        100 & 4 & 0.033 & 0.025 & 0.003000 \\
        100 & 5 & 0.039 & 0.033 & 0.003200 \\
        \hline
    \end{tabular}
   \caption{Empirical sizes for the tests in Section \ref{sec3.4} for \( k=3 \) groups with equal mean and shared covariance matrix. See Example \ref{ex7} for details.}
\label{tab:rejection_rates2}
\end{table}
\begin{example} \label{ex8}
{\bf (Power under Variance Heterogeneity)} We compare three multivariate normal distributions with mean vector \( \bm{\mu}_i = \mathbf{1}\in \mathbb{R}^d \) and  covariance matrices $\Sigma_i$ whose  $(j,k)$th entry $\Sigma_{i,jk}$  defined by \( \Sigma_{i,jk} = \sigma_i \) if \( j = k \), and 0.25 otherwise, with \( \sigma_1 = \sigma_3 = 1 \) and \( \sigma_2 = 1.5 \). Simulations were conducted for \( d = 2, 3 \) and varying sample sizes.

The results in Figure \ref{fig:power_sample_d} illustrate that Wald and ANOVA tests show increasing power with larger samples, with ANOVA slightly outperforming Wald. The Percentile test is ineffective throughout. Higher dimension \( d = 3 \) offers minor gains in detection ability.
\end{example}
\begin{figure}[htb]
    \centering
    \includegraphics[width=0.8\textwidth, page=1]{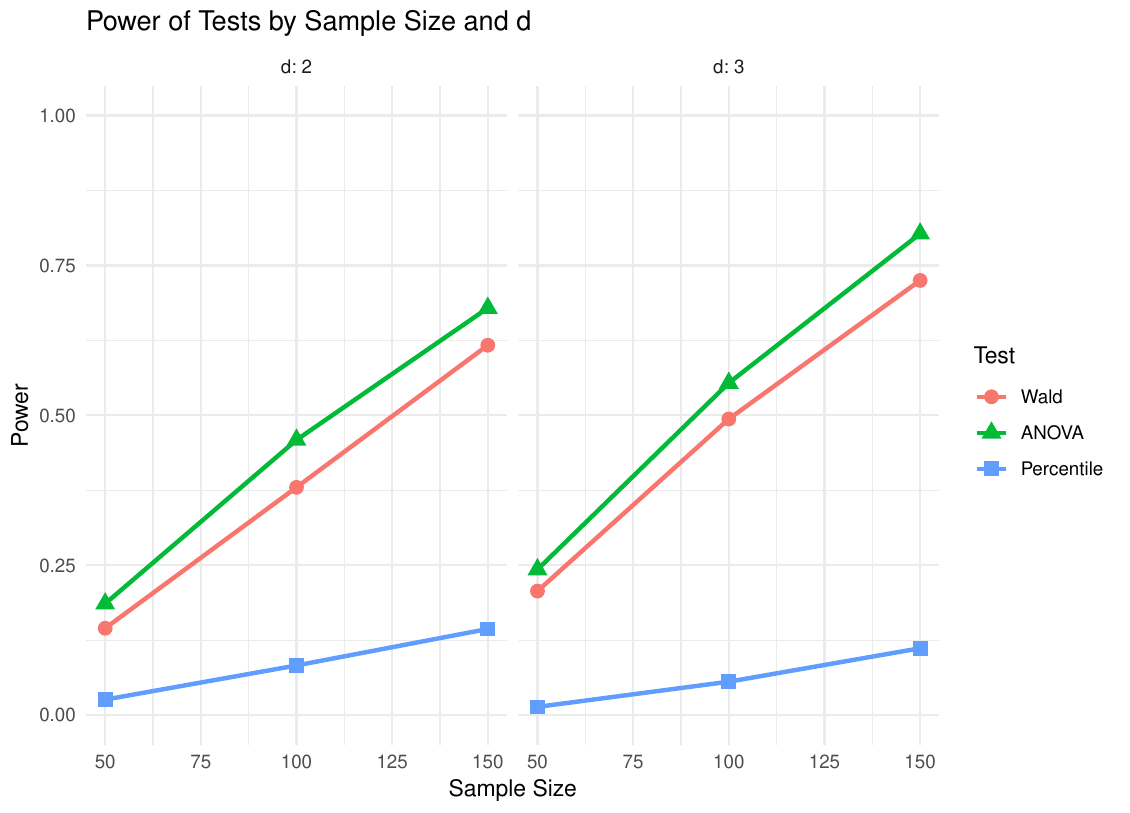}
  \caption{Empirical power for the tests in Section \ref{sec3.4} for \( k=3 \) groups with group-specific variances. Results shown for dimensions \( d = 2, 3 \). See Example \ref{ex8} for details.}
\label{fig:power_sample_d}
\end{figure}

\begin{example} \label{ex9}
{\bf (Power under Distributional Heterogeneity)} We consider three groups where two follow a multivariate normal distribution with mean vector \( \bm{\mu}_i = \mathbf{1}\in \mathbb{R}^d \) and  covariance matrix with unit variances and  0.25 correlations. In each group, the third component is replaced with a lognormal distribution with mean \(-0.35\) and variance \(0.7\), matching that of  original normal component.

Figure \ref{fig:power_sample_e} shows that Wald and ANOVA tests demonstrate strong power even for moderate sample sizes. The Percentile test again performs poorly. Results are consistent across \( d = 2 \) and \( d = 3 \).
\end{example}
\begin{figure}[htb]
    \centering
    \includegraphics[width=0.8\textwidth, page=1]{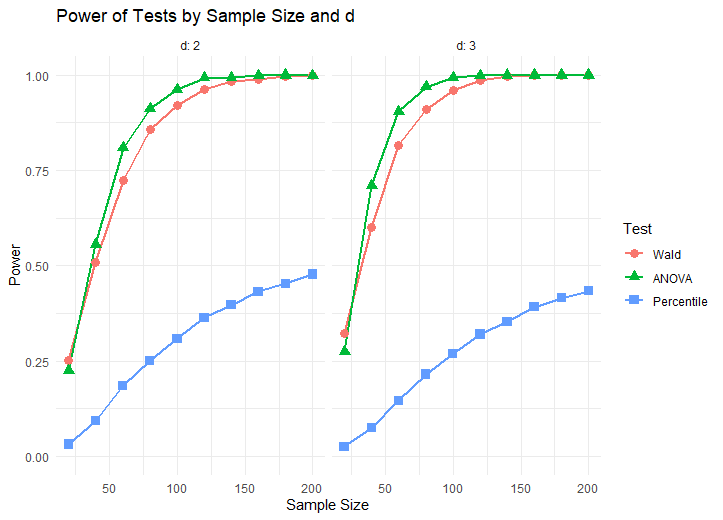}
    \caption{Empirical power for the tests in Section \ref{sec3.4} with a lognormal component replacing one variable. Results shown for \( d = 2, 3 \). See Example \ref{ex9} for details.}
\label{fig:power_sample_e}
\end{figure}
\begin{comment}
\begin{table}[h]
    \centering
    \renewcommand{\arraystretch}{1.2}
    \begin{tabular}{|c|c|c|c|c|}
        \hline
        \textbf{Sample Size} & \textbf{d} & \textbf{Wald Test} & \textbf{ANOVA Test} & \textbf{Percentile Test} \\
        \hline
        50  & 2 & 0.145 & 0.186 & 0.026000 \\
        50  & 3 & 0.207 & 0.243 & 0.013889 \\
        \hline
        100 & 2 & 0.380 & 0.459 & 0.083000 \\
        100 & 3 & 0.494 & 0.554 & 0.055889 \\
        \hline
        150 & 2 & 0.617 & 0.679 & 0.143833 \\
        150 & 3 & 0.725 & 0.803 & 0.111667 \\
        \hline
    \end{tabular}
    \caption{Power of the tests in Example \ref{ex8} (\( k=3 \), one group different).}
    \label{tab:rejection_rates3}
\end{table}
\end{comment}
\section{Case Study} \label{casestudy}
We return  to the motivating data  example and analyze the data on stable isotope measurements from muscle tissues of four fish species: Lake Whitefish (LKWF) ($n_1=69$), Broad Whitefish (BDWF) ($n_2=71$),  Arctic Cisco (ARCS) ($n_3=67$) and Least Cisco (LSCS) ($n_4=70$). We consider the three stable isotope ratios: Carbon isotope ratio ($\delta^{13}\mathrm{C}$), Nitrogen isotope ratio ($\delta^{15}\mathrm{N}$),  Sulfur isotope ratio ($\delta^{34}\mathrm{S}$). 

The data analysis yielded a p-value of $p < 0.0001$ for both the Wald-type test and the ANOVA-type test under the global null hypothesis. This strongly indicates that the niches of the four species are significantly different.

For a comprehensive analysis across all components, Figure \ref{fig:ci_fish} presents simultaneous $95 \%$ confidence intervals. Additionally, Table \ref{tab:variable_mapping} provides the corresponding variables and groups considered in each interval.
Note that niche overlaps close to $0.5$ indicate strong similarity in distributions, whereas values significantly deviating from 0.5 indicate substantial niche differentiation. In other words, lower values imply reduced probability that an observation from one species falls between two randomly drawn observations of another, suggesting distinct niches or lower inter-species competition. 

Our results indicate that the overlap measures for most of the variables are significantly different from the reference value ($1/2$). However, the $\delta^{13}$C isotopes do not show a significant difference from the reference value  for both Lake Whitefish and Broad Whitefish. Notably, only the elliptical-based confidence intervals have an upper bound slightly below $1/2$. This suggests a high level of competition for this isotope. In contrast, a low niche overlap is observed for the other isotopes, implying lower competition.

In their analysis, \citet{swanson_15} mentioned a high overlap between Lake Whitefish and Least Cisco but did not specify which variable contributed to this overlap. Additionally, their pairwise overlap analysis does not provide insight into how competition occurs within the overall group. They reported very low overlap for Broad Whitefish in all their pairwise comparisons. However, our detailed analysis reveals that this low overlap is present only for $\delta^{15}$N and $\delta^{34}$S, but not for $\delta^{13}$C. This significant discrepancy may arise because the assumption of multivariate normality is even more strongly violated for this group than for the others.
\begin{table}[h]
    \centering
    \begin{tabular}{lll}
        \toprule
        \textbf{Variable} & \textbf{Species} & \textbf{Isotope} \\
        \midrule
        Var 1  & ARCS & $\delta^{15}$N \\
        Var 2  & ARCS & $\delta^{13}$C \\
        Var 3  & ARCS & $\delta^{34}$S \\
        Var 4  & BDWF & $\delta^{15}$N \\
        Var 5  & BDWF & $\delta^{13}$C \\
        Var 6  & BDWF & $\delta^{34}$S \\
        Var 7  & LKWF & $\delta^{15}$N \\
        Var 8  & LKWF & $\delta^{13}$C \\
        Var 9  & LKWF & $\delta^{34}$S \\
        Var 10 & LSCS & $\delta^{15}$N \\
        Var 11 & LSCS & $\delta^{13}$C \\
        Var 12 & LSCS & $\delta^{34}$S \\
        \bottomrule
    \end{tabular}
    \caption{Linking of variable names with output in Figure \ref{fig:ci_fish}. The table defines the  variables combining the species that is compared with the overall group and the Isotope.}
    \label{tab:variable_mapping}
\end{table}

\begin{figure}[h]
    \centering
    \includegraphics[width=\textwidth]{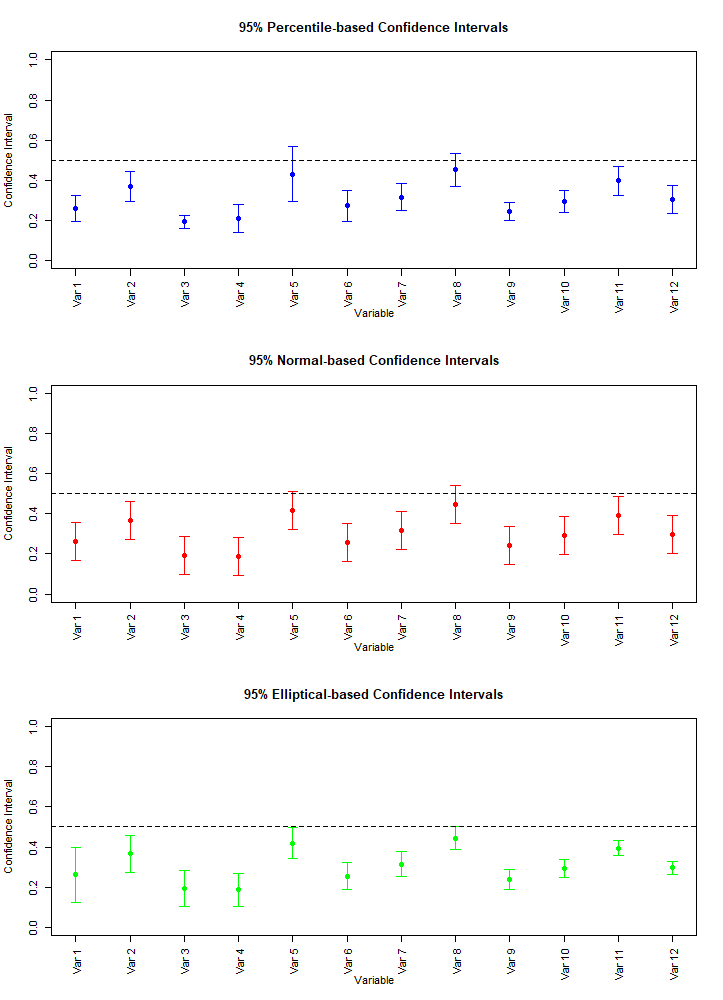}
    \caption{95\% Confidence Intervals: Percentile-based, Normal-based, and Elliptical-based.}
    \label{fig:ci_fish}
\end{figure}

\FloatBarrier
\section{Discussion} \label{discussion}
The methodological advancements presented in this paper significantly enhance the toolkit available for ecologists analyzing niche overlap in complex, real-world scenarios. Traditional niche overlap measures have been restricted primarily to univariate cases or relied on parametric assumptions that often fail to hold in practice, especially for ecological data which are often highly skewed and heavy-tailed. Our approach addresses these limitations by introducing a robust, nonparametric framework for quantifying and testing multivariate niche overlap. The proposed method demonstrates strong consistency and desirable asymptotic properties, which we confirmed through rigorous theoretical proofs and comprehensive simulation studies.

Our analysis of ecological data on stable isotope ratios from multiple fish species underscores the practical utility of this method. In particular, it revealed nuanced insights into species interactions and resource competition, highlighting both overlaps and distinct niches among the studied species, which prior parametric methods failed to capture accurately. This highlights the importance of utilizing nonparametric methods in ecological studies where multivariate normality assumptions may be inappropriate.

Future research should explore further refinements, including extensions to handle high-dimensional ecological data and temporally or spatially structured observations. Future work could also consider genuinely joint notions of niche overlap, for example based
on multivariate depth or hypervolume constructions, which define different estimands and
entail additional challenges for statistical inference and interpretability. Another interesting direction for future research is the longitudinal comparison of niche overlaps, enabling us to analyze multiple time points and gain deeper insights into ecological dynamics.  Overall, our method offers ecologists and researchers in related fields a powerful statistical tool to more accurately characterize ecological niches and interactions in multivariate contexts.

\section*{Acknowledgments}
The authors sincerely thank the Associate Editor and the expert reviewers for their careful reading of the manuscript and for their insightful and constructive comments, which have significantly improved the manuscript compared with its original version. The authors are also grateful to the Editor for the efficient and professional handling of the manuscript throughout the review process.  Jonas Beck gratefully acknowledges the Austrian Marshall Plan Foundation for the scholarship support that enabled his visit to the University of Kentucky in Spring 2024. He also sincerely thanks the Dr. Bing Zhang Department of Statistics for providing a welcoming and stimulating research environment, and is especially grateful to Dr. William Rayens, Chair of the Department, and Dr. Solomon Harrar for sponsoring his visit as a foreign scholar.  

\bibliographystyle{apalike}

\bibliography{niche_extend}

\appendix
\section{Proofs} \label{proofs}

\subsection{Derivation of the estimator} \label{A3}
Note that for even $n_i$ a split at the $l_i$-th observation corresponds to a split at the empirical median of the $i^{(s)}$-th sample.
Using ranks
we rewrite  the plug-in estimator: 
\begin{align*}
    & \int^{(\hat{F}^{(s)}_{i,n_i})^{-1}(1/2)}_{- \infty} \hat{H}^{(s)}_{N} d \hat{F}^{(s)}_{i,n_i}  
    = \frac{1}{n_i} \sum_{r=1}^{l_i} \hat{H}^{(s)}_N (X_{i,r}^{(s)})= \frac{1}{n_i} \sum_{r=1}^{l_i}  \sum_{j=1}^k \hat{F}^{(s)}_{j,n_j} (X_{i,r}^{(s)}) \\
    &=\frac{1}{n_i} \sum_{r=1}^{l_i}  \sum_{j=1}^{k}  \frac{1}{n_j} \sum_{r'=1}^{n_j} \mathbf{1}_{(-\infty,X_{i,r}^{(s)}]}(X_{j,r'}^{(s)}) 
    = \frac{1}{n_i N} \Bigl (  \sum_{r=1}^{l_i} R_{i,r}^{(s)} - \sum_{r=1}^{l_i} R_{i,r}^{(i)(s)} \Bigl ),
\end{align*}
if $n_i$ is even;
\begin{align*}
  & \int^{(\hat{F}_{i,n_i}^{(s)})^{-1}(1/2)}_{- \infty} \hat{H}^{(s)}_{N} d \hat{F}^{(s)}_{i,n_i} =
\frac{1}{n_i}
\left(
\sum_{r=1}^{l_i-1} 
\hat H^{(s)}_N\!\left(X^{(s)}_{i,r}\right)
+
\frac{1}{2}\,
\hat H^{(s)}_N\!\left(X^{(s)}_{i,l_i}\right)
\right)  \\
  & = \frac{1}{n_i N} \Bigl ( \sum_{r=1}^{l_i-1} R_{i,r}^{(s)} - \sum_{r=1}^{l_i-1} R_{i,r}^{(i)(s)} +\frac{1}{2} R_{i,l_i}^{(s)} - \frac{1}{2} R_{i, l_i}^{(i)(s)} \Bigl),
\end{align*}
if $n_i$ is odd. Similarly
\begin{align*}
    \int_{(\hat{F}_{i,n_i}^{(s)})^{-1}(1/2)}^{\infty} \hat{H}^{(s)}_{N} d\hat{F}^{(s)}_{i,n_i} = \begin{cases}
       \frac{1}{n_i N} \Bigl (  \sum_{r=l_i+1}^{n_i} R_{i,r}^{(s)} - \sum_{r = l_i +1}^{n_i} R_{i,r}^{(i)(s)} \Bigl), \text{if } n_i \text{ is even} \\
        \frac{1}{n_i N} \Bigl ( \sum_{r=l_i+1}^{n_i} R_{i,r}^{(s)} - \sum_{r = l_i +1}^{n_i} R_{i,r}^{(i)(s)} +\frac{1}{2} R_{i,l_i}^{(s)} - \frac{1}{2} R_{i,l_i}^{(i)(s)} \Bigl ),  \text{if } n_i \text{ is odd}
    \end{cases} 
\end{align*}

Therefore,
\begin{align*}
   & I(\hat{H}^{(s)}_{N},\hat{F}^{(s)}_{i,n_i}) 
    = \frac{2}{n_i N} \Bigl (\sum_{r=l_i+1}^{n_i} R_{i,r}^{(s)} - \sum_{r = l_i +1}^{n_i} R_{i,r}^{(i)(s)} - \sum_{r=1}^{c_i} R_{i,r}^{(s)} + \sum_{r=1}^{c_i} R_{i,r}^{(i)(s)} \Bigl ),
\end{align*}
where $c_i=l_i$ for $n_i$ even and $c_i=l_i-1$ for $n_i$ odd.
\subsection{Proof of Theorem \ref{thm:asym:2}} \label{A4}
    
It can be shown that \cite{langthaler}
\[ 
\mathcal{I}(\mathbf{H},\mathbf{F}_i) = \lambda_i \mathcal{I}(\mathbf{F}_i,\mathbf{F}_i) + (1- \lambda_i) \mathcal{I}(\mathbf{H}_{-i},\mathbf{F}_i) = \frac{\lambda_i}{2} \mathbf{1}_d + (1- \lambda_i) \mathcal{I}(\mathbf{H}_{-i},\mathbf{F}_i),
\]
where \begin{align*}
    \mathbf{H} = \lambda_i \mathbf{F}_i + (1- \lambda_i) \mathbf{H}_{-i}  \text{ and } \mathbf{H}_{-i} := \frac{1}{1-\lambda_i}\sum_{j \neq i}^k \lambda_j \mathbf{F}_j 
\end{align*}
Here, we utilized the linearity of $I(.,.)$ in the first argument.

Now consider:
\begin{align}
 \sqrt{N} \left( \widehat{\mathcal{I}}_{N,n_i}(\mathbf{H},\mathbf{F}_i) - \mathcal{I}(\mathbf{H},\mathbf{F}_i) \right) &= 
\sqrt{N } \frac{N-n_i}{N} \left( \widehat{\mathcal{I}}_{N-n_i,n_i}(\mathbf{H}_{-i},\mathbf{F}_i) - \mathcal{I}(\mathbf{H}_{-i},\mathbf{F}_i) \right)  \notag \\
&+ \sqrt{N } \frac{n_i}{N} \left(\widehat{\mathcal{I}}_{n_i,n_i}(\mathbf{F_{i}},\mathbf{F}_i) - \mathcal{I}(\mathbf{F_{i}},\mathbf{F}_i) \right) \label{star}
\end{align}
Since $\sqrt{N } \frac{n_i}{N} \left(\widehat{\mathcal{I}}_{n_i,n_i}(\mathbf{F_{i}},\mathbf{F}_i) - \mathcal{I}(\mathbf{F_{i}},\mathbf{F}_i) \right) = o_{p}(1)$, we only need to consider the asymptotic distribution of the first term in \eqref{star}. Hence, the asymptotic distribution is driven by terms involving samples independent of $F_i$.

The map
\[
\psi(F_i, F_j) = F_i \circ F_j^{-1}
\]
for distribution functions \( F_i \) and \( F_j \) defined on \( \mathbb{R} \) is Hadamard-differentiable tangentially to \( D[a, b] \times C[a, b] \) by \citep[][Comment 4 in Section 3.10]{vaart2023statistical}. Further, the map $\phi: \mathbb{D} \rightarrow \mathbb{E}$ defined by:
\begin{align*} 
 \phi(f)=  \int_0^1 f \biggl(1- \frac{\alpha}{2} \biggl )d \alpha- \int_0^1 f \biggl(\frac{\alpha}{2} \biggl) d \alpha,
\end{align*}
is Hadamard-differentiable. Since the composition of two Hadamard-differentiable maps is again Hadamard-differentiable, by chain rule \citep[][Theorem 20.9]{vaart_1998}, the map $\phi \circ \psi$ is Hadamard-differentiable. Noting that $$I(F_i^{(s)},F_j^{(s)})=\phi \circ \psi(F_i^{(s)},F_j^{(s)}),$$ we have Hadamard-differentiability of the niche overlap function. The extension to the multivariate case is straight forward. 
By the Donsker Theorem \citep[][Theorem 19.3]{vaart_1998} we know that
$
    \sqrt{N}(\hat{F}^{(s)}_{i,n_i}-F^{(s)}_{i,n_i}) \  
$
converges in distribution to a Gaussian process for all $i=1, \ldots, k,s= 1, \ldots, d$. Applying the functional delta method \citep[][Theorem 3.9.4]{van1996weak} proves the asymptotic of \eqref{eq:conv,mult}.

The conditional central limit theorem holds in outer probability for each bootstrapped empirical distribution function,
$
    \sqrt{N}(\hat{F}^{(s)*}_{i,n_i}-\hat{F}^{(s)}_{i,n_i}) 
$
 for all $i=1, \ldots, k,s= 1, \ldots, d$ by \citep[][Theorem 3.6.1]{van1996weak}. Moreover, because the bootstrap resamples $d$-dimensional vectors, the collection of
marginal bootstrap empirical processes $(\widehat F^{(1)*}_{i,n_i},\ldots,\widehat F^{(d)*}_{i,n_i})$
is generally dependent across components when the components of $X_{i,r}$ are dependent,
and this dependence is therefore reproduced in the bootstrap covariance structure (hence in $\Sigma$),
so that the conditional weak convergence holds jointly in $\mathbb{R}^{kd}$.

By applying the Delta-method for empirical bootstrap processes \citep[][Theorem 3.9.11]{van1996weak}, it follows (analogous to the unconditional case) that \eqref{eq: boot,mult} converges conditionally on the data in outer probability to the same limiting distribution as \eqref{eq:conv,mult}.

\clearpage
\section*{Supplementary Material}
\addcontentsline{toc}{section}{Supplementary Material}

\setcounter{section}{0}
\renewcommand{\thesection}{S\arabic{section}}

\section{Two-Sample Setting} \label{sec:2}
In this section, we propose nonparametric niche overlap  \citep{ParkinsonKutilKupplerJunkerTrutschnigBathke+2018, langthaler} in the multivariate case for two-samples.
\subsection{Statistical Model}
Let the independent and identically distributed random vectors 
\[
\mathbf{X_{k}} = (X_{k}^{(1)}, \ldots, X_{k}^{(d)})^\top \sim \mathbf{F},
\]
for $k=1,\ldots,n$ represent the data for the first sample, 
 where \(X_{k}^{(s)}\) denotes the observation on the \(s\)-th endpoint of the \(k\)-th subject in the first sample. Similarly, let \[
\mathbf{Y_{k}} = (Y_{k}^{(1)}, \ldots, Y_{k}^{(d)})^\top \sim \mathbf{G},
\]for $k=1,\ldots,m$ be identically and independently distributed random vectors representing the data from the second sample, where \(Y_{k}^{(s)}\) denotes the observation on the  $s$th endpoint of the \(k\)-th subject. We assume that the two samples are mutually independent.   Let \(N = n+ m \) be the total sample size for each endpoint. The marginal distributions  are denoted by $F^{(1)}, \ldots, F^{(d)}$ and $G^{(1)}, \ldots, G^{(d)}$, i.e., \(X_{k}^{(s)} \sim F^{(s)}\) for \( s \in \{1, \ldots, d\}\), where \(F^{(s)}(x) = \text{Pr}(X_{k}^{(s)} < x) \) and \(Y_{k}^{(s)} \sim G^{(s)}\) for  \(s \in \{1, \ldots, d\}\), where \(G^{(s)}(x) = \text{Pr}(Y_{k}^{(s)} < x) \). The marginal distributions \(F^{(s)}\) and \(G^{(s)}\) are assumed to be absolutely continuous. 

To define the overlap index, we first have to introduce the random variables $Y^{(s)}_{\text{low}} \sim G^{(s)}_{\text{low}}$ independently of $Y^{(s)}_{\text{up}} \sim G^{(s)}_{\text{up}}$, where $G^{(s)}_{\text{low}}$ and $G^{(s)}_{\text{up}}$ are the distribution of $Y_i^{(s)}$ conditional on the events $Y \leq( G^{(s)})^{-1} (0.5)$ and $Y > ( G^{(s)})^{-1} (0.5)$, respectively.  That is, 
\begin{align*}
G^{(s)}_{\text{low}}(x) &= 
\begin{cases}
2 G^{(s)}(x), \ &x < ( G^{(s)})^{-1} (0.5) \\
1,      &x \geq ( G^{(s)})^{-1} (0.5)
\end{cases}
\end{align*}
and
\begin{align*}
G^{(s)}_{\text{up}}(x) &= 
\begin{cases}
0, \ &x < ( G^{(s)})^{-1} (0.5) \\
2G^{(s)}(x)-1,      &x \geq ( G^{(s)})^{-1} (0.5)
\end{cases}.
\end{align*}
$G^{(s)}_{\text{low}}$ is the distribution function $G^{(s)}$ conditioned on $Y^{(s)} \leq (G^{(s)})^{-1}(0.5)$, and 
$G^{(s)}_{\text{up}}$ is $G^{(s)}$ conditioned on 
$Y^{(s)} > (G^{(s)})^{-1}(0.5)$.
The random variables $X_{\text{low}}^{(s)} \sim F_{\text{low}}^{(s)}$ and $X_{\text{up}}^{(s)} \sim F_{\text{up}}^{(s)}$ are analogously defined, by switching the roles of $G$s and $F$s.

The multivariate overlap index is defined by the vector
\begin{align*}
    \mathcal{I}(\mathbf{F},\mathbf{G})= (I(F^{(1)},G^{(1)}), \ldots, I(F^{(d)},G^{(d)}))^\top,
\end{align*}
where 
\begin{equation} \label{eq:mvOverlapIndex}
    I(F^{(s)},G^{(s)}) =  P(Y_{\text{low}}^{(s)}< X^{(s)} < Y_{\text{up}}^{(s)}).
\end{equation}  
Thus, the index is large when most of $F^{(s)}$ lies in the central region of $G^{(s)}$,
and small when $F^{(s)}$ is concentrated far below or above $G^{(s)}$’s median. Equivalently, for each component it represents the probability that an observation from the first distribution falls between two observations from the second—one drawn from below its median and one from above.

The overlap index can be rewritten  \citep{ParkinsonKutilKupplerJunkerTrutschnigBathke+2018, langthaler}  as
\begin{align*}
I(F^{(s)},G^{(s)}) &= \int_0^1 F^{(s)} \circ {G^{(s)}}^{-1}(1- \alpha/2)d \alpha - \int_0^1 F^{(s)} \circ {G^{(s)}}^{-1}(\alpha/2) d \alpha   \\
&= 2 \biggl (\int_{{G^{(s)}}^{-1}(1/2)}^{\infty} F^{(s)} dG^{(s)} - \int^{{G^{(s)}}^{-1}(1/2)}_{- \infty} F^{(s)} d G^{(s)}  \biggl ). 
\end{align*} 
The first term equation shows the close relation to the ROC-curve.

Note that $F^{(s)}=G^{(s)}$ implies the corresponding overlap index $I(F^{(s)},G^{(s)})=1/2$. However, the converse is not true. In addition, the overlap index is not symmetric. Because of this asymmetry, $I(F^{(s)},G^{(s)})$ and $I(G^{(s)},F^{(s)})$ can capture different facets of the relationship between $F^{(s)}$ and $G^{(s)}$, so the order of the arguments should be chosen deliberately (or both directions reported). Further, it holds that $$0\le I(F^{(s)},G^{(s)})+ I(G^{(s)},F^{(s)}) \le 1$$ and the upper bound is attained when the medians of the two distributions are equal \citep[][Lemma 2.10]{ParkinsonKutilKupplerJunkerTrutschnigBathke+2018}.

\subsection{Estimation} \label{sec:est2}

We estimate the functional $\mathcal{I}(\mathbf{F},\mathbf{G})$ by  plugging in the estimators of $F^{(s)}$ and $G^{(s)}$ in \eqref{eq:mvOverlapIndex} as
\begin{equation*}%\label{eq:mvnIndexExtimator}
    \widehat{\mathcal{I}}_{n,m}(\mathbf{F},\mathbf{G})= (I(\hat{F}_{n}^{(1)},\hat{G}_{m}^{(1)}), \ldots, I(\hat{F}_{n}^{(d)},\hat{G}_{m}^{(d)}))^\top,
\end{equation*}
where
\begin{align*} 
& I(\hat{F}^{(s)}_{n},\hat{G}^{(s)}_{m}) \notag \\ 
& = \int_0^1 \hat{F}^{(s)}_{n} \circ (\hat{G}^{(s)}_{m})^{-1}(1- \alpha/2)d \alpha - \int_0^1 \hat{F}^{(s)}_{n} \circ (\hat{G}^{(s)}_{m})^{-1}(\alpha/2) d \alpha \notag \\
& = 2 \biggl (\int_{(\hat{G}^{(s)}_{m})^{-1}(1/2)}^{\infty} \hat{F}^{(s)}_{n} d\hat{G}^{(s)}_{m} - \int^{(\hat{G}^{(s)}_{m})^{-1}(1/2)}_{- \infty} \hat{F}^{(s)}_{n}  d \hat{G}^{(s)}_{m} \biggl )
\end{align*}
$\hat F_n^{(s)}$ and $\hat G_m^{(n)}$ are marginal empirical distributions of the $s$-th endpoint in two samples, respectively, $(\hat{G}^{(s)}_{m})^{-1}$ is the empirical quantile function of $G_n^{(s)}$ defined by $$(\hat{G}^{(s)}_{m})^{-1}(t)= \inf  \{ y: \hat{G}^{(s)}_{m}(y) \geq t \}$$for   $s=1, \ldots, d$. 
This estimator is strongly consistent, which follows directly by the strong consistency of the component-wise niche overlap estimators proved in \cite{ParkinsonKutilKupplerJunkerTrutschnigBathke+2018} and the fact that $d$ is finite. 

To formulate a computationally convenient form of our estimator $I(\hat{F}^{(s)}_{n},\hat{G}^{(s)}_{m})$, we assume without loss of generality that $Y_1^{(s)} \leq Y_2^{(s)} \leq \ldots \leq Y^{(s)}_{m}$. We denote by $L$ the largest integer smaller than or equal to $(m+1)/2$.  The $L$ observations below the median are $Y^{(s)}_1, \ldots, Y^{(s)}_L$. We denote their ranks in the combined sample by  $R_1^{Y^{(s)}<}, \ldots, R_L^{Y^{(s)}<}$.  Similarly, we denote the  ranks of the remaining observations  by $R_{L+1}^{Y^{(s)}>}, \ldots, R_{m}^{Y^{(s)}>}$. The corresponding rank sums are defined by 
\begin{align*}
R_{\cdot}^{Y^{(s)}<} = \sum_{i=1}^{L} R_i^{Y^{(s)}<}, \quad \text{and} \quad R_{\cdot}^{Y^{(s)}>} = \sum_{i=L+1}^{m} R_i^{Y^{(s)}>}.
\end{align*}

The    estimator $ I(\hat{F}^{(s)}_{n},\hat{G}^{(s)}_{m})$ of the niche overlap for the $s$-th endpoint can then, as shown in Lemma 2.11 in \cite{ParkinsonKutilKupplerJunkerTrutschnigBathke+2018}, be expressed as 
\begin{align*}
I(\hat{F}^{(s)}_{n},\hat{G}^{(s)}_{m})=
\frac{2}{m n} (R_{\cdot}^{Y^{(s)}>}-R_{\cdot}^{^{(s)}Y<} ) + \frac{1}{2} c,
\end{align*}
where $c=- m/n$ for $m$ even and $c \approx -m/n$ for $m$ odd.

\subsection{Asymptotic Distribution and Resampling}
In this subsection we use the similarity of the overlap index to the ROC curve, denoted  as $F^{(s)} \circ {G^{(s)}}^{-1}(u)$, to derive an asymptotic distribution for the estimator of the multivariate niche overlap index. 
Therefore, we need some technical conditions: 
\begin{assumption} \label{assum_1}
The ratio of the sample sizes $\frac{m}{n}\to \nu>0$ as the total sample size $N=n+m\to \infty$.  \end{assumption}

One can see that the slope of the ROC curve is given by
\begin{align*}
F^{(s)} \circ {G^{(s)}}^{-1}(u)=\frac{f^{(s)}({G^{(s)}}^{-1}(u))}{g^{(s)}({G^{(s)}}^{-1}(u))}.
\end{align*}
\begin{assumption}\label{assum_2}
The slope of the ROC, $F^{(s)} \circ {G^{(s)}}^{-1}(u)$, is bounded on  the interval $(a, b)$  for any $0<a < b<1$ and any $s=1,\ldots,d$.
\end{assumption}
Assumption \ref{assum_1} is a standard proportional divergence requirement for sample sizes. It guarantees that the sample sizes are of the same order of magnitude and thus their contributions to the dataset are comparable.  The bounded-slope condition of Assumption \ref{assum_2} is equivalent to the assumption that the slope of the curve $F^{(s)} \circ {G^{(s)}}^{-1}$ is finite for all $u \in (0,1)$, i.e.,
\begin{align*}
\ \ \Biggl | \frac{f^{(s)}({G^{(s)}}^{-1}(u))}{g^{(s)}({G^{(s)}}^{-1}(u))} \Biggl | < \infty,\quad \forall u \in (0,1).
\end{align*}

Under Assumptions \ref{assum_1} and \ref{assum_2}, it can be shown \cite{hsieh_turnbull} that the plug-in estimator of the ROC curve is strongly consistent.

We now state the asymptotic distribution of the overlap index estimator.
\begin{theorem}
\label{th:asymptotic}
    Under Assumptions \ref{assum_1}  and \ref{assum_2}, \begin{equation}
    \sqrt n(\widehat{\mathcal{I}}_{n,m}(\mathbf{F},\mathbf{G})-\mathcal{I}(\mathbf{F},\mathbf{G}))\overset{d}{\to}N_d( \bm 0,\Sigma), \label{eq:conv}
     \end{equation}

where $N_d(\bm \mu,\Sigma)$ stands for the multivariate normal distribution with mean vector  $\bm \mu$ and a positive semidefinite covariance matrix $\Sigma$.
\end{theorem}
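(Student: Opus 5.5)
We prove Theorem~\ref{th:asymptotic} with the functional delta method, in the same spirit as the proof of Theorem~\ref{thm:asym:2}. The overlap index of each component is a smooth functional of the pair of marginal distribution functions. The dependence between components then enters only through the joint weak limit of the marginal empirical processes.

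\textbf{Step 1: joint weak convergence of the empirical processes.} Fix the $2d$ marginal processes $\sqrt n(\hat F^{(s)}_n-F^{(s)})$ and $\sqrt m(\hat G^{(s)}_m-G^{(s)})$, $s=1,\dots,d$.
\begin{itemize}
\item The class $\{\mathbf 1\{x^{(s)}\le t\}: t\in\mathbb R,\ s=1,\dots,d\}$ on $\mathbb R^d$ is a finite union of VC classes, hence $F$-Donsker. By \citep[][Theorem 19.3]{vaart_1998} and its multivariate-index version, the vector of the $d$ marginal $F$-processes converges jointly in $\ell^\infty(\mathbb R)^d$ to a tight Gaussian process $(\mathbb B_F^{(1)},\dots,\mathbb B_F^{(d)})$.
\item The limit has cross-covariance $\operatorname{Cov}(\mathbb B_F^{(s)}(t),\mathbb B_F^{(s')}(t'))=P(X^{(s)}\le t,X^{(s')}\le t')-F^{(s)}(t)F^{(s')}(t')$.
\item The same holds for the $G$-processes, and the two samples are independent, so the $2d$-vector converges jointly with independent $F$- and $G$-blocks.
\item By Assumption~\ref{assum_1}, $\sqrt n=\sqrt{n/m}\,\sqrt m$ with $\sqrt{n/m}\to\nu^{-1/2}$. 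Hence $\sqrt n(\hat G^{(s)}_m-G^{(s)})$ converges to $\nu^{-1/2}\mathbb B_G^{(s)}$.
\end{itemize}

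\textbf{Step 2: Hadamard differentiability of the overlap functional.} Write $I(F^{(s)},G^{(s)})=\phi\circ\psi(F^{(s)},G^{(s)})$, where
\[
\psi(F,G)=F\circ G^{-1}, \qquad \phi(f)=\int_0^1 f(1-\alpha/2)\,d\alpha-\int_0^1 f(\alpha/2)\,d\alpha .
\]
\begin{itemize}
\item The map $\phi$ is linear and bounded from $\ell^\infty(0,1)$ to $\mathbb R$ (norm at most $2$), so it is trivially Hadamard differentiable.
\item The composition–inverse map $\psi$ is Hadamard differentiable at $(F,G)$ tangentially to $D\times C$ on every $[a,b]\subset(0,1)$, by \citep[][Comment 4 in Section 3.10]{vaart2023statistical}. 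This uses that $G$ has a positive density near $G^{-1}[a,b]$ and that the ROC slope $f/g\circ G^{-1}$ is bounded there, which is Assumption~\ref{assum_2}.
\item The derivative is $(\alpha,\beta)\mapsto \alpha\circ G^{-1}-\frac{f\circ G^{-1}}{g\circ G^{-1}}\,\beta\circ G^{-1}$.
\item The chain rule \citep[][Theorem 20.9]{vaart_1998} then gives differentiability of $\phi\circ\psi$. Stacking the $d$ components gives a Hadamard differentiable map $\mathbb D^{2d}\to\mathbb R^d$.
\end{itemize}

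\textbf{Step 3: the functional delta method.} Apply \citep[][Theorem 3.9.4]{van1996weak}. The limit is the image of a Gaussian process under a continuous linear map, so it is $N_d(\mathbf 0,\Sigma)$, with $\Sigma$ automatically positive semidefinite. Explicitly, $\Sigma_{ss'}$ is the covariance of
\[
\int_0^1\!\bigl[\mathbb B_F^{(s)}\circ G^{(s)-1}-\nu^{-1/2}\tfrac{f^{(s)}}{g^{(s)}}\circ G^{(s)-1}\,\mathbb B_G^{(s)}\circ G^{(s)-1}\bigr](1-\alpha/2)\,d\alpha-\int_0^1[\cdots](\alpha/2)\,d\alpha
\]
with the corresponding term for index $s'$.

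\textbf{Main obstacle: the tails.} Differentiability of $\psi$ is only tangential on compacta $[a,b]\subset(0,1)$, whereas $\phi$ integrates over all of $(0,1)$. The plan is a truncation argument.
\begin{itemize}
\item Split the integrals into the part over $[\epsilon,1-\epsilon]$ and the remainder over $[0,\epsilon)\cup(1-\epsilon,1]$.
\item Apply the delta method on the middle part.
\item Bound the remainder by showing that $\sqrt n\int_{\text{tails}}|\hat F_n\circ\hat G_m^{-1}-F\circ G^{-1}|\,d\alpha$ is uniformly small in probability as $\epsilon\to0$. This uses $\|\sqrt n(\hat F_n-F)\|_\infty=O_p(1)$.
\item The remaining ROC-increment term needs a bound of $\sqrt m\,|F(\hat G_m^{-1}(u))-F(G^{-1}(u))|$ near $0$ and $1$. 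If bounded slope is only available on compacta, this requires an extra argument. I would first try the alternative representation $2(\int_{\mathrm{med}}^\infty F\,dG-\int_{-\infty}^{\mathrm{med}}F\,dG)$ together with integration by parts, which avoids quantiles outside a neighbourhood of the median.
\item Finally, let $\epsilon\to0$ using \citep[][Theorem 4.2]{billingsley} style approximation. This lemma is not among the paper's references, so I would cite it in words rather than by key.
\end{itemize}
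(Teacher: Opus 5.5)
Your argument is sound, but it does not follow the paper's proof of this theorem. In Section~\ref{A1}, the convergence of the ROC process comes from the Hsieh--Turnbull strong approximation. For each fixed $s$, $\sqrt m\,(\hat F^{(s)}_n\circ(\hat G^{(s)}_m)^{-1}-F^{(s)}\circ(G^{(s)})^{-1})$ equals $\sqrt\nu\,B_1\circ F^{(s)}\circ(G^{(s)})^{-1}+\frac{f^{(s)}}{g^{(s)}}\circ(G^{(s)})^{-1}\,B_2$ up to a negligible remainder, almost surely and uniformly on compact $[a,b]\subset(0,1)$. The result is then read off by applying the bounded linear map $\phi$ (stacked into $\Phi$), the delta method and Cram\'er--Wold.

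You instead obtain the ROC limit from the empirical processes, via Hadamard differentiability of $(F,G)\mapsto F\circ G^{-1}$ and the chain rule. That is how the paper argues Theorems~\ref{thm:asym:2} and~\ref{thmboot}, not this one. The paper's route gives an almost-sure representation with a rate. Yours makes explicit two things the paper leaves implicit:
\begin{enumerate}
\item \emph{Joint convergence over $s=1,\dots,d$.} Hsieh--Turnbull is applied one component at a time with separately constructed bridges, so it does not by itself identify the cross-covariances $\Sigma_{ss'}$ that Cram\'er--Wold needs. Your Donsker step for the finite union of VC classes does.
\item \emph{The tails.} The paper passes from every compact $[a,b]$ to the integral over $(0,1)$ only by remarking that $a,b$ are arbitrary, which is not enough. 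You isolate this as a separate truncation step.
\end{enumerate}

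The step you leave open closes without any slope condition near $0$ or $1$. Since $G$ is continuous, $\hat G_m^{-1}=G^{-1}\circ\hat U_m^{-1}$, where $\hat U_m^{-1}$ is the empirical quantile function of the uniforms $G(Y_j)$. Hence $F\circ\hat G_m^{-1}=R\circ\hat U_m^{-1}$ with $R=F\circ G^{-1}$ nondecreasing. Extend $R$ by $R(0+)$ to the left of $0$, and let $\delta_m=\sup_u|\hat U_m^{-1}(u)-u|=O_p(m^{-1/2})$. Both $R(\hat U_m^{-1}(u))$ and $R(u)$ then lie in $[R(u-\delta_m),R(u+\delta_m)]$, so
\[
\sqrt m\int_0^\epsilon\bigl|R(\hat U_m^{-1}(u))-R(u)\bigr|\,du\le 2\sqrt m\,\delta_m\bigl\{R(\epsilon+\delta_m)-R(0+)\bigr\},
\]
which is $O_p(1)\cdot o(1)$ as $\epsilon\to0$. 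The upper tail is symmetric. The truncated Gaussian limits converge to the full one because $\int_0^1 R'(u)\,du\le1$.

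Your fallback also works, and it makes truncation unnecessary. With $s_G=\mathbf 1\{\cdot>G^{-1}(1/2)\}-\mathbf 1\{\cdot\le G^{-1}(1/2)\}$ one has $I(F,G)=2\int F\,s_G\,dG$. A direct expansion followed by integration by parts gives the limit $2\int\mathbb B_F\,s_G\,dG-2\nu^{-1/2}\int\mathbb B_G\,s_G\,dF$. This agrees with your delta-method expression and uses only continuity of $F$ and uniqueness of the median of $G$.

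Finally, Hadamard differentiability of $\psi$ on $[a,b]$ needs $g$ continuous and positive and $f$ uniformly continuous near $G^{-1}([a,b])$. That is more than the bounded slope in Assumption~\ref{assum_2}, and you attribute positivity of $g$ to that assumption, which it does not contain. The paper relies on the same unstated regularity through Hsieh--Turnbull's hypotheses.
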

\begin{proof}
See Section \ref{A1}
\end{proof}

In principle, an explicit formula for the asymptotic covariance matrix~$\Sigma$ could be derived, as the proof of Theorem~\ref{th:asymptotic} shows that its entries can be expressed in terms of the covariance between two Brownian bridges, for which closed-form expressions are available. However, the expression  would be very involved and, thus, would not be particularly insightful for practical application.  Therefore, we will employ  a bootstrap  strategy. In what follows, we devise a bootstrap approximation for the distribution of the multivariate overlap estimator. 

Let $X_1^*,\ldots,X_n^*$ be an iid sample from the empirical distribution $\hat{\mathbf F}_n$
on $\mathbb{R}^d$, obtained by sampling with replacement from the observed vectors
$X_1,\ldots,X_n\in\mathbb{R}^d$. Analogously, let $Y_1^*,\ldots,Y_m^*$ be an iid sample
from the empirical distribution $\hat{\mathbf G}_m$ on $\mathbb{R}^d$, obtained by sampling
with replacement from $Y_1,\ldots,Y_m\in\mathbb{R}^d$. Writing
$X_j^*=(X_j^{(1)*},\ldots,X_j^{(d)*})^\top$ and $Y_j^*=(Y_j^{(1)*},\ldots,Y_j^{(d)*})^\top$,
we define, for each component $s=1,\ldots,d$, the bootstrap marginal empirical distribution
functions by
\[
\hat{F}^{(s)*}_{n}(t) := \frac{1}{n}\sum_{j=1}^{n}\mathbf{1}_{(-\infty,t]}\!\left(X^{(s)*}_{j}\right),
\qquad
\hat{G}^{(s)*}_{m}(t) := \frac{1}{m}\sum_{j=1}^{m}\mathbf{1}_{(-\infty,t]}\!\left(Y^{(s)*}_{j}\right).
\] 
We define the bootstrap version of $\widehat{\mathcal{I}}_{n,m}(\mathbf{F},\mathbf{G})$ by 
 $$ \widehat{\mathcal{I}}_{n,m}^*(\mathbf{F},\mathbf{G})= (I(\hat{F}_{n}^{(1)*},\hat{G}_{m}^{(1)*}), \ldots, I(\hat{F}_{n}^{(d)*},\hat{G}_{m}^{(d)*}))^\top.$$
To show that this bootstrap strategy works, we prove that the bootstrap process converges to the same limit distribution as our estimator.

\begin{theorem} \label{thmboot}
    Under Assumptions \ref{assum_1} and \ref{assum_2}, it holds that
  $  
    \sqrt n \{ \widehat{\mathcal{I}}^*_{n,m}(\mathbf{F},\mathbf{G})- \widehat{\mathcal{I}}_{n,m}(\mathbf{F},\mathbf{G}) \}
 $  
    converges in outer probability conditionally on the data to the same limit distribution as \eqref{eq:conv}.
\end{theorem}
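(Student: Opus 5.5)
The plan is to derive Theorem~\ref{thmboot} from the functional delta method for the bootstrap, \citep[Theorem 3.9.11]{van1996weak}, applied to the same Hadamard-differentiable map that gives Theorem~\ref{th:asymptotic}. For each $s$ write $I(F^{(s)},G^{(s)})=\phi\circ\psi(F^{(s)},G^{(s)})$, with $\psi(F,G)=F\circ G^{-1}$ and $\phi(f)=\int_0^1 f(1-\alpha/2)\,d\alpha-\int_0^1 f(\alpha/2)\,d\alpha$. The full parameter $\mathcal I(\mathbf F,\mathbf G)$ is then the image of the $2d$-tuple of marginals $(F^{(1)},\dots,F^{(d)},G^{(1)},\dots,G^{(d)})$ under the product map $\Phi=(\phi\circ\psi,\dots,\phi\circ\psi)$. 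Since a finite product of Hadamard-differentiable maps is again Hadamard differentiable, the Hadamard differentiability of $\Phi$ tangentially to the appropriate subspace (continuous paths for the $G$-coordinates) has already been established in the proof of Theorem~\ref{th:asymptotic}, and I reuse it without change.

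\textbf{Step 1: joint bootstrap Donsker statement.} For fixed $\mathbf F$, the class $\{x\mapsto \mathbf 1\{x^{(s)}\le t\}: s\le d,\ t\in\mathbb R\}$ on $\mathbb R^d$ is a finite union of VC classes, hence Donsker. By \citep[Theorem 3.6.1]{van1996weak}, the process $\sqrt n(\hat F^{(s)*}_n-\hat F^{(s)}_n)_{s\le d}$ converges conditionally in outer probability, jointly in $(\ell^\infty(\mathbb R))^d$, to the same tight Gaussian limit as $\sqrt n(\hat F^{(s)}_n-F^{(s)})_{s\le d}$. Because entire vectors are resampled, the cross-covariances between components are reproduced. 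The same argument applies to $\mathbf G$ with rate $\sqrt m$. Since the two resampling schemes are independent given the data, and $m/n\to\nu$ by Assumption~\ref{assum_1}, rescaling by $\sqrt{n/m}\to\nu^{-1/2}$ yields joint conditional convergence of the whole $2d$-tuple at rate $\sqrt n$. The limit is a pair of independent blocks, each consisting of correlated Brownian bridges composed with the marginals. The $G$-block limit has continuous paths because the $G^{(s)}$ are continuous, which puts it in the tangent set.

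\textbf{Step 2: delta method.} Theorem~3.9.11 of \cite{van1996weak} requires Hadamard differentiability at the true parameter, tangentially to a set containing the support of the limit, together with the unconditional convergence that Theorem~\ref{th:asymptotic} already supplies. Applying it to $\Phi$ gives the result: $\sqrt n\{\Phi(\hat{\mathbf F}^*,\hat{\mathbf G}^*)-\Phi(\hat{\mathbf F},\hat{\mathbf G})\}$ converges conditionally in outer probability to $\Phi'(\mathbb G)$, the limit in \eqref{eq:conv}.

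\textbf{Main obstacle.} The delicate step is that the inverse map $G\mapsto G^{-1}$ is differentiable only on compacts $[a,b]\subset(0,1)$, so $\psi$ is well behaved only there, while $\phi$ integrates over all of $(0,1)$. I would handle this by truncation. On $[a,b]$, Assumption~\ref{assum_2} makes the ROC derivative $f^{(s)}/g^{(s)}\circ (G^{(s)})^{-1}$ bounded, which gives the chain-rule derivative $-(f/g)\circ G^{-1}\cdot h\circ G^{-1}+k\circ G^{-1}$. The tails $\alpha\in(0,2a)\cup(2-2b,2)$ must then be shown negligible uniformly in $n$, both for the original and for the bootstrap processes. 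This works because the integrands are bounded by $1$: each empirical ROC-type increment is bounded by the supremum of the empirical process, so the tail contribution is $O_P(a+(1-b))$, even conditionally. Letting $a\downarrow0$ and $b\uparrow1$ with a standard approximation lemma \citep[e.g.][Theorem 3.2 in Billingsley form]{vaart_1998} completes the argument. I would also check that the ties and the half-weighting at the empirical median for odd $m$ contribute only $O(1/m)=o(n^{-1/2})$.
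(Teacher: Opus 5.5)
Your proposal follows the paper's route. It uses the conditional Donsker theorem (Theorem~3.6.1 of van der Vaart and Wellner) for the resampled marginal empirical processes, Hadamard differentiability of $\phi\circ\psi$ and of the product map $\Phi$ via the chain rule, and the bootstrap delta method (Theorem~3.9.11). You go beyond the paper in two places. You make the joint conditional convergence across the $d$ coordinates explicit. You also notice that $\psi$ is differentiable only on compacts $[a,b]\subset(0,1)$, while $\phi$ integrates over all of $(0,1)$; the paper's chain-rule step passes over this mismatch.

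\textbf{The tail step.} Your justification for the tails does not hold. Write $R=F^{(s)}\circ(G^{(s)})^{-1}$ and
\[
\hat F^{(s)}_n\bigl((\hat G^{(s)}_m)^{-1}(u)\bigr)-R(u)=\bigl(\hat F^{(s)}_n-F^{(s)}\bigr)\bigl((\hat G^{(s)}_m)^{-1}(u)\bigr)+\bigl[R(\tilde u)-R(u)\bigr],\qquad \tilde u=G^{(s)}\bigl((\hat G^{(s)}_m)^{-1}(u)\bigr).
\]
The first term is bounded by $\|\hat F^{(s)}_n-F^{(s)}\|_\infty$ and does contribute $O_P(a)$ after scaling.

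The second term is not controlled by the supremum of an empirical process. Assumption~\ref{assum_2} bounds the slope of $R$ only on compacts, so $R$ may be arbitrarily steep at $0$ and $1$. For example, take $G^{(s)}$ uniform on $(0,1)$ and $F^{(s)}(x)=x^{\gamma}$ on $[0,1]$ with $0<\gamma<1/2$. The assumptions hold, yet at $u=1/m$ the increment $R(\tilde u)-R(u)$ is of exact order $m^{-\gamma}\gg m^{-1/2}$. The scaled tail is then of order $a^{\gamma+1/2}$ rather than $a$, so your rate $O_P(a+(1-b))$ fails.

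\textbf{The fix.} What rescues the step is the monotonicity of $R$, not the boundedness of the integrand. Extend $R$ by $R(0+)$ and $R(1-)$ outside $(0,1)$, and let $\epsilon_m=\sup_u|\tilde u-u|\le\|\hat G^{(s)}_m-G^{(s)}\|_\infty+1/m$. Then
\[
\int_0^a\bigl|R(\tilde u)-R(u)\bigr|\,du\le\int_0^a\bigl[R(u+\epsilon_m)-R(u-\epsilon_m)\bigr]\,du\le 2\epsilon_m\bigl[R(a+\epsilon_m)-R(0+)\bigr].
\]
Hence $\sqrt n$ times the lower tail is at most $O_P(1)\bigl\{a+R(a+\epsilon_m)-R(0+)\bigr\}$, which is negligible in the $\lim_{a\downarrow 0}\limsup_n$ sense. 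The upper tail is handled symmetrically.

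The same bound holds conditionally for the bootstrap, with these replacements:
\begin{itemize}
\item $(F^{(s)},G^{(s)})$ by $(\hat F^{(s)}_n,\hat G^{(s)}_m)$;
\item $R$ by $\hat F^{(s)}_n\circ(\hat G^{(s)}_m)^{-1}$, using its consistency for $R$;
\item $\epsilon_m$ by $\epsilon^*_m=\sup_u\bigl|\hat G^{(s)}_m\bigl((\hat G^{(s)*}_m)^{-1}(u)\bigr)-u\bigr|=O_P(m^{-1/2})$.
\end{itemize}
With this replacement, and with the approximation lemma stated for conditional convergence in bounded-Lipschitz distance in outer probability, your argument is complete.
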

\begin{proof}
See Section \ref{A2}
\end{proof}

\subsection{Hypothesis Test} \label{sec2.4}
As mentioned before, in case of equal distributions in one component the overlap index takes the marginal value $1/2$. Therefore, it is natural to test the null hypothesis 
\begin{align}\label{eq:mHypothesis}
    H_0 : \mathcal{I}(\mathbf{F},\mathbf{G})=\frac{1}{2} \mathbf{1}_d \text{ versus } H_1: \mathcal{I}(\mathbf{F},\mathbf{G}) \neq \frac{1}{2} \mathbf{1}_d,
\end{align}
where $\mathbf{1}_d=(1, \ldots, 1)^T$ is a vector of all $1$'s

By Theorem \ref{th:asymptotic}, under $H_0: \mathcal{I}(F,G) = \frac{1}{2} \bm 1_d$,  and the Continuous Mapping Theorem and under the additional assumption of a nonsingular covariance matrix, the distribution of the quadratic form
   \[
Q_n = n \cdot\ (\widehat{\mathcal{I}}_{n,m}(\mathbf{F},\mathbf{G})- \frac{1}{2} \bm{1_d})^T \Sigma^{-1} (\widehat{\mathcal{I}}_{n,m}(\mathbf{F},\mathbf{G})-\frac{1}{2} \bm{1_d})
\]
  converges to a central $\chi^2$-distribution with $d$ degrees of freedom. As we have shown in Theorem \ref{thmboot}, the bootstrap process converges, so we can put in the bootstrapped covariance matrix $\Sigma^*$ and the resulting test statistic 
    \[
\hat{Q}_n = n \cdot\ (\widehat{\mathcal{I}}_{n,m}(\mathbf{F},\mathbf{G})- \frac{1}{2} \bm{1_d})^T \Sigma^{*^{-1}} (\widehat{\mathcal{I}}_{n,m}(\mathbf{F},\mathbf{G})-\frac{1}{2} \bm{1_d})
\]converges to the same limiting distribution.
Therefore, the test statistic $\hat{Q}_n$ is again central $\chi^2$-distributed with $d$ degrees of freedom. 
If $\Sigma$ is singular, one may either reduce the contrast to a full-rank sub-contrast
or replace $\Sigma^{-1}$ by the Moore--Penrose generalized inverse $\Sigma^{+}$,
leading to a Wald-type statistic based on the effective rank.

It is well known in the literature that this approximation is too liberal for small sample size \citep{cui2021nonparametric, brunner2002multivariate}. Our simulations also confirm this assertion. The use of the inverse of the estimated covariance as the middle matrix of the quadratic form is mainly responsible for this fragile behavior of the Wald-type test. To overcome this problem,  we   adopt the ANOVA-type statistic first proposed for parametric setting \cite{box} and later extended for univariate and multivariate nonparametric settings \cite{brunner97, brunner2002multivariate}. The idea of this statistic is to use the trace of the estimated covariance matrix in the denominator, thereby making  the statistic resemble the classical ANOVA-statistic as the ratio of two quadratic forms. Therefore, 
 under the null hypothesis,  the ANOVA-type statistic
\begin{align*}
F_n &= \frac{n}{\operatorname{tr}({\hat\Sigma^*})} (\widehat{\mathcal{I}}_{n,m}(\mathbf{F},\mathbf{G}) - \frac{1}{2} \mathbf{1}_d)^{\top} (\widehat{\mathcal{I}}_{n,m}(\mathbf{F},\mathbf{G}) - \frac{1}{2} \mathbf{1}_d) \\
&= \frac{n}{\operatorname{tr}(\hat{\Sigma^*})} \sum_{l=1}^{d} \left( \mathcal{I}(\hat{F}^{(l)},\hat{G}^{(l)}) - \frac{1}{2} \right)^2
\end{align*}
has approximate central \( F(\hat{\nu}, \infty) \) distribution, where \( \hat{\nu} = [\operatorname{tr}({\hat \Sigma^*})]^2/\operatorname{tr}(\hat \Sigma^*{}^2)\). For more details on this F-approximation we refer the reader to  \citet{brunner2002multivariate}. Our simulations show that the ANOVA-type statistic is better for small sample sizes in keeping the level of the test, while still having similar power compared to the Wald-type statistic.

Testing the hypothesis \eqref{eq:mHypothesis} is equivalent to testing simultaneously the null hypotheses that the niche overlap equals $1/2$ in each component.  In view of this, we consider the multivariate test statistic $T= (T_1,\ldots,T_d),$ where 
 \begin{align*} 
    T_s =  \sqrt{n} \frac{(I(\hat{F}^{(s)}_{n}, \hat{G}^{(s)}_{m}) - \frac{1}{2})}{\sigma^{(s)*}}
    \end{align*}
   for $s=1,\ldots,d$ and   $\sigma^{(s)*}$ is the bootstrapped standard deviation of the $s$-th component.  By Theorem \ref{th:asymptotic}, under the null hypothesis, $T$ has asymptotic  multivariate normal distribution with mean vector $\mathbf{0}$ and covariance matrix $P^*$, where $P^*$ is the correlation matrix of the empirical bootstrap sample.   
According to the so-called Max-T test \citep{multiple-contrast, KonietschkeBösigerBrunnerHothorn+2013+63+73,GUNAWARDANA2019165}  the null hypothesis \eqref{eq:mHypothesis} may be rejected at level $\alpha$, if 
\begin{align*} 
    T_0= \max (|T_1|, \ldots, |T_d|) \geq z(1-\alpha, P^*),
\end{align*}
where $z(1-\alpha, P^*)$ the equicoordinate quantiles. The quantiles can computed with the \textsc{R}-package \textsc{mvtnorm} \citep{mvtnorm, genz_bretz}.   This test, however, tends to lack power as shown later in the Simulation section. Notwithstanding this limitation, an advantage of the Max-T test is that it is possible to detect the components where the difference lies  by comparing  each component of $T$ with $z(1-\alpha, P^*)$. The vector $T$  may be analyzed for various  contrast matrices which could make an interesting new direction of research, similar to the works on multiple contrast test for the relative effect \citep{GUNAWARDANA2019165,dobler}.

\subsection{Confidence Intervals} \label{sec_ci}
Similar to hypothesis testing there are a multitude of approaches to derive confidence regions and simultaneous confidence intervals. Given that the estimator of the niche overlap has an asymptotic multivariate normal distribution  (Theorem \ref{th:asymptotic}) and  a bootstrap approximation is valid (Theorem \ref{thmboot}), the test based on Wald-type statistic can be inverted to derive an asymptotic elliptical $1-\alpha$ confidence region, i.e., the set of all $d$-dimensional vectors $ {\bf v}=(v_1, \ldots, v_d)^\top$ satisfying
\begin{align*}
   \frac{1}{\sqrt{n}} \biggl ( \widehat{\mathcal{I}}_{n,m}(\mathbf{F},\mathbf{G})- {\bf v} \biggl )^T  \Sigma^*_{n,m} \biggl( \widehat{\mathcal{I}}_{n,m}(\mathbf{F},\mathbf{G})- {\bf v} \biggl ) \leq q_{\chi^2_2}(1-\alpha), 
    \end{align*}
    where $q_{\chi^2_d}(1-\alpha)$ is the $1-\alpha $ quantile of a $\chi^2$ distribution with d degrees of freedom and  $\Sigma^*_{n,m}$ is the sample covariance matrix of the bootstrap estimates.
   This construction is already well known in parametric statistics.

The statistic $T$ can be used to obtain asymptotic compatible \( (1 - \alpha) \)   (compatible in the sense that any parameter vector inside the joint confidence region satisfies all componentwise intervals at the same time) Simultaneous Confidence Intervals (SCIs) for the niche overlap as 
\[
 \widehat{\mathcal{I}}_{n,m}(\mathbf{F},\mathbf{G}) \pm z(1 - \alpha, \Sigma^* )
 ,\]
Here also we calculate the equi-coordinate multivariate normal quantiles by the \textsc{R}-package \textsc{mvtnorm} \citep{mvtnorm, genz_bretz} and the methods described therein.
 
 In small samples, the multivariate normal distribution may provide a poor approximation for the distribution of $T$.  Using Theorem \ref{thm:asym:2}, one may use empirical bootstrap quantiles with a Bonferroni correction to construct valid simultaneous confidence intervals as 
 
\begin{align*}
[\widehat{\mathcal{I}}_{n,m}(\mathbf{F},\mathbf{G}) - \frac{1}{\sqrt{n}} \mathbf{z}^*_{\alpha/(2d)}, \widehat{\mathcal{I}}_{n,m}(\mathbf{F},\mathbf{G})+ \frac{1}{\sqrt{n}} \mathbf{z}^*_{1-\alpha/(2d)}  ]
\end{align*}
where $\mathbf{z}^*_{1-\alpha/(2d)}=(z^*_{I(\hat{F}^{(1)}_{n}, \hat{G}^{(1)}_{m}), 1-\alpha/(2d)},\ldots,  z^*_{I(\hat{F}^{(d)}_{n}, \hat{G}^{(d)}_{m}), 1-\alpha/(2d)})^\top$ 
denote the quantiles of the empirical bootstrap distribution of our estimator $\widehat{\mathcal{I}}_{n,m}(\mathbf{F},\mathbf{G})$. 

 All three methods will be compared in our simulation study in Section \ref{sim}.
 \section{Additional Numerical Examples} \label{s2}
\subsection{Confidence Intervals}
The aim of the numerical studies in this section is to evaluate the performance of the confidence region and simultaneous confidence interval methods in the two-sample case. The interval length is calculated as the mean of all component-wise interval lengths. We compare the three methods proposed in Section \ref{sec_ci}; namely,  (i) the empirical bootstrap quantiles with a Bonferroni correction (Bonf),  (ii) the  equi-coordinate multivariate normal quantiles (MVT) and (iii) the asymptotic elliptical confidence region (Ellipse) as described in Section \ref{sec_ci}. 
\begin{example} \label{ex1}
This example is concerned with the comparison of two \( d \)-dimensional multivariate normal distributions with mean vectors \( \bm \mu^{(g)} = \mathbf{0} \in \mathbb{R}^d \) and positive definite covariance matrices $\Sigma^{(g)}$ for $g=1,2$. For each group \(g\in \{1,2\}\), the off diagonal entries of $\Sigma^{(g)}$ are set to $0.25$ and the diagonal entries are set to $\sigma^2{} ^{(1)}=2$ and  $\sigma^2{} ^{(2)}=1$ respectively.
 The true value of the overlap index is ${\rm I}(\bm F,\bm G)=0.6082\cdot \bm 1_d$.

All the methods generally maintain high coverage close to the nominal level, but the Ellipsoid method tends to yield slightly higher coverage, especially as the dimension increases. The Bonferroni and MVT methods produce shorter confidence intervals compared to the Ellipsoid method, indicating they could be more precise (Table \ref{table1}). 

The observed deviations from nominal coverage are attributable to slow finite-sample
convergence of Wald-type procedures when the covariance matrix is estimated, particularly
in moderate dimensions.The Bonferroni procedure is not uniformly conservative in finite samples here, since it is
applied to marginal percentile bootstrap intervals, which may exhibit slight undercoverage
for moderate sample sizes and skewed distributions.
\end{example}
\begin{table}[htb]
\centering
\begin{tabular}{|c|c|c|c|c|c|c|c|}
\hline
\multirow{2}{*}{n=m} & \multirow{2}{*}{Dim.} & \multicolumn{3}{c|}{Coverage Probability} & \multicolumn{3}{c|}{Interval Length} \\ \cline{3-8} 
                     &                            & Bonf & MVT & Ellipse & Bonf & MVT & Ellipse \\ \hline
\multirow{4}{*}{50}  & 2                          & 0.964      & 0.960   & 0.976         & 0.271      & 0.275   & 0.299         \\ \cline{2-8} 
                     & 3                          & 0.980      & 0.972   & 0.982         & 0.289      & 0.286   & 0.341         \\ \cline{2-8} 
                     & 4                          & 0.980      & 0.970   & 0.994         & 0.301      & 0.307   & 0.376         \\ \cline{2-8} 
                     & 5                          & 0.979      & 0.975   & 0.996         & 0.309      & 0.317   & 0.406         \\ \hline
\multirow{4}{*}{100} & 2                          & 0.969      & 0.961   & 0.975         & 0.185      & 0.187   & 0.204         \\ \cline{2-8} 
                     & 3                          & 0.965      & 0.959   & 0.989         & 0.198      & 0.200   & 0.233         \\ \cline{2-8} 
                     & 4                          & 0.971      & 0.970   & 0.995        & 0.206      & 0.209   & 0.257         \\ \cline{2-8} 
                     & 5                          & 0.972      & 0.969   & 0.998         & 0.212      & 0.215   & 0.277         \\ \hline
\multirow{4}{*}{200} & 2                          & 0.948      & 0.948   & 0.971         & 0.128      & 0.129   & 0.141         \\ \cline{2-8} 
                     & 3                          & 0.954      & 0.957   & 0.988         & 0.137      & 0.138   & 0.161         \\ \cline{2-8} 
                     & 4                          & 0.952      & 0.948   & 0.991         & 0.142      & 0.144   & 0.177         \\ \cline{2-8} 
                     & 5                          & 0.952      & 0.955   & 0.99         & 0.146      & 0.149   & 0.192         \\ \hline
\end{tabular}
\caption{Coverage probability and Length of confidence intervals for niche overlap measures in two multivariate normal distributions with equal mean and unequal covariance matrices. The nominal level is 95\%. The quantities $n$ and $m$ are the group sample sizes, Dim is the dimension $d$.  The method Bonf is based on the empirical bootstrap quantiles with a Bonferroni correction, MVT is based on the  equi-coordinate multivariate normal quantiles and Ellipse is based on the asymptotic elliptical confidence region. The simulation setting is described in Example \ref{ex1}.  The interval length is calculated as the mean of all component-wise interval lengths.} 
\label{table1}
\end{table}

\begin{example} \label{ex2}
In this example, we investigate the effect of correlation between the component variables.  Here also, we focus on the comparison of two \( d \)-dimensional multivariate normal distributions with equal mean vectors \( \bm \mu^{(g)} = \mathbf{0} \in \mathbb{R}^d \) and equal covariance matrices  \( \Sigma \) for $g\in \{1,2\}$. The matrix $\Sigma$ has constant off-diagonal entries equal to \( 0.75 \) and diagonal entries equal to \( 3 \).
This indicates fairly strong correlations between components ($\rho=0.25$). The true value of the overlap index is ${\rm I}(\bm F,\bm G)=0.5\cdot \bm 1_d$.

We observe similar trends as in Example \ref{ex1} and notice that higher correlation ($\rho = 0.75$) impacts coverage probabilities slightly but does not drastically change performance trends (Table \ref{table2}).
\end{example}
\begin{table}[htb]
\centering
\begin{tabular}{|c|c|c|c|c|c|c|c|}
\hline
\multirow{2}{*}{n=m} & \multirow{2}{*}{Dim.} & \multicolumn{3}{c|}{Coverage Probability} & \multicolumn{3}{c|}{Interval Length} \\ \cline{3-8} 
                             &                            & Bonf. & Mvt. & Ellip. & Bonf. & Mvt. & Ellip. \\ \hline
\multirow{4}{*}{50}          & 2                          & 0.965      & 0.955   & 0.969         & 0.268      & 0.271   & 0.295         \\ \cline{2-8} 
                             & 3                          & 0.964      & 0.956   & 0.982         & 0.286      & 0.289   & 0.337         \\ \cline{2-8} 
                             & 4                          & 0.973      & 0.961   & 0.988         & 0.297      & 0.302   & 0.371         \\ \cline{2-8} 
                             & 5                          & 0.978      & 0.972   & 0.994         & 0.306      & 0.312   & 0.400         \\ \hline
\multirow{4}{*}{100}         & 2                          & 0.966      & 0.957   & 0.971         & 0.186      & 0.187   & 0.205         \\ \cline{2-8} 
                             & 3                          & 0.968      & 0.961   & 0.987         & 0.198      & 0.200   & 0.234         \\ \cline{2-8} 
                             & 4                          & 0.960      & 0.960   & 0.993         & 0.206      & 0.209   & 0.257         \\ \cline{2-8} 
                             & 5                          & 0.963      & 0.962   & 0.994         & 0.212      & 0.215   & 0.278         \\ \hline
\multirow{4}{*}{200}         & 2                          & 0.943      & 0.942   & 0.963         & 0.130      & 0.131   & 0.143         \\ \cline{2-8} 
                             & 3                          & 0.943      & 0.949   & 0.979         & 0.139      & 0.140   & 0.163         \\ \cline{2-8} 
                             & 4                          & 0.956      & 0.956   & 0.993         & 0.144      & 0.146   & 0.178         \\ \cline{2-8} 
                             & 5                          & 0.946      & 0.951   & 0.989         & 0.149      & 0.150   & 0.194         \\ \hline
\end{tabular}
\caption{Coverage probability and Length of confidence intervals for niche overlap measures in two multivariate normal distributions with equal mean and  covariance matrices but high correlation among the component variables. The nominal level is 95\%. The quantities $n$ and $m$ are the group sample sizes, Dim is the dimension $d$.  The method Bonf is based on the empirical bootstrap quantiles with a Bonferroni correction, MVT is based on the  equi-coordinate multivariate normal quantiles and Ellipse is based on the asymptotic elliptical confidence region. The simulation setting is described in Example \ref{ex2}.  The interval length is calculated as the mean of all component-wise interval lengths.}
\label{table2}
\end{table}

\begin{example} \label{ex3}
To evaluate performance under heavy tails, we consider \( d \)-dimensional multivariate \( t \)-distributions with one degree of freedom (\( \nu = 1 \)), where the two distributions have the same mean vector \( \bm \mu^{(g)} = \mathbf{0} \in \mathbb{R}^d \) and equal scale matrices \( \Sigma \).  The entries of $\Sigma$ are the same as in Example \ref{ex2}.  Here also,  the true value of the overlap index is ${\rm I}(\bm F,\bm G)=0.5\cdot \bm 1_d$.

Again, overall the performances are similar as in Examples \ref{ex1} and \ref{ex2}, while the heavy-tailed nature of the t-distribution has only a mild effect in that the intervals are slightly conservative, especially for smaller sample sizes (Table \ref{table3}). 
\end{example}
\begin{table}[htb]
\centering
\begin{tabular}{|c|c|c|c|c|c|c|c|}
\hline
\multirow{2}{*}{n=m} & \multirow{2}{*}{Dim.} & \multicolumn{3}{c|}{Coverage Probability} & \multicolumn{3}{c|}{Interval Length} \\ \cline{3-8} 
                     &                            & Bonf. & Mvt. & Ellip. & Bonf. & Mvt. & Ellip. \\ \hline
\multirow{4}{*}{50}  & 2                          & 0.970      & 0.946   & 0.944         & 0.268      & 0.269   & 0.289         \\ \cline{2-8} 
                     & 3                          & 0.975      & 0.965   & 0.954         & 0.285      & 0.286   & 0.325         \\ \cline{2-8} 
                     & 4                          & 0.982      & 0.968   & 0.959        & 0.297      & 0.299   & 0.353         \\ \cline{2-8} 
                     & 5                          & 0.983      & 0.969   & 0.966         & 0.305      & 0.308   & 0.379         \\ \hline
\multirow{4}{*}{100} & 2                          & 0.958      & 0.951   & 0.930         & 0.186      & 0.186   & 0.200         \\ \cline{2-8} 
                     & 3                          & 0.963      & 0.961   & 0.930         & 0.198      & 0.198   & 0.224         \\ \cline{2-8} 
                     & 4                          & 0.964     & 0.963   & 0.950         & 0.206      & 0.206   & 0.259         \\ \cline{2-8} 
                     & 5                          & 0.969      & 0.961   & 0.949         & 0.212      & 0.212   & 0.259         \\ \hline
\multirow{4}{*}{200} & 2                          & 0.958      & 0.952   & 0.939         & 0.130      & 0.130   & 0.139         \\ \cline{2-8} 
                     & 3                          & 0.950      & 0.941   & 0.914         & 0.139      & 0.138   & 0.155         \\ \cline{2-8} 
                     & 4                          & 0.962      & 0.953   & 0.956         & 0.144      & 0.144   & 0.169         \\ \cline{2-8} 
                     & 5                          & 0.954      & 0.951   & 0.949         & 0.149      & 0.147   & 0.180         \\ \hline
\end{tabular}
\caption{Coverage probability and Length of confidence intervals for niche overlap measures in two multivariate $t$ distributions with equal mean and  covariance matrices. The nominal level is 95\%. The quantities $n$ and $m$ are the group sample sizes, Dim is the dimension $d$.  The method Bonf is based on the empirical bootstrap quantiles with a Bonferroni correction, MVT is based on the  equi-coordinate multivariate normal quantiles and Ellipse is based on the asymptotic elliptical confidence region. The simulation setting is described in Example \ref{ex3}.  The interval length is calculated as the mean of all component-wise interval lengths.}
\label{table3}
\end{table}

\subsection{Hypothesis Testing}
This section evaluates the empirical size and power of the four two-sample test procedures introduced in Section \ref{sec2.4}: (i) the Wald type test (Wald), (ii) the ANOVA-type test (ANOVA), (iii) the max-type test based on the maximum of univariate statistics (Max T), and (iv) the percentile test based on empirical bootstrap quantiles with Bonferroni correction (Percentile).
\begin{example} \label{ex4}
{\bf (Empirical Size)} We compare two \( d \)-dimensional multivariate normal distributions with equal mean vectors \( \bm \mu^{(g)} = \mathbf{1} \in \mathbb{R}^d \) and common covariance matrix \( \Sigma \), with unit variances and off-diagonal entries of \( 0.25 \). Simulations were conducted for \( d = 2 \) and \( d = 5 \).

As shown in Tables \ref{tab:rejection_rates1} and \ref{tab:rejection_rates_updated}, Wald and ANOVA tests maintain empirical sizes close to the nominal level, with the Wald test showing a tendency to be liberal in small samples and higher dimensions. Max T and Percentile tests are consistently conservative. The extremely small empirical sizes reflect slow
convergence of extremal bootstrap quantiles in dependent multivariate settings, rather
than a failure of the underlying asymptotic theory.
\end{example}  

\begin{table}[htb]
\centering
\begin{tabular}{cccccc}
\toprule
\textbf{Sample Size} & \textbf{Wald} & \textbf{ANOVA} & \textbf{Max T} & \textbf{Percentile} \\
\midrule
10  & 0.063 & 0.052 & 0.000 & 0.001 \\
20  & 0.037 & 0.034 & 0.000 & 0.000 \\
30  & 0.041 & 0.035 & 0.000 & 0.000 \\
40  & 0.050 & 0.048 & 0.000 & 0.000 \\
50  & 0.055 & 0.050 & 0.000 & 0.000 \\
60  & 0.048 & 0.039 & 0.001 & 0.000 \\
70  & 0.048 & 0.044 & 0.000 & 0.000 \\
80  & 0.052 & 0.050 & 0.000 & 0.000 \\
90  & 0.049 & 0.047 & 0.000 & 0.000 \\
100 & 0.054 & 0.048 & 0.004 & 0.000 \\
\bottomrule
\end{tabular}
\caption{Empirical sizes for the test introduced in Section \ref{sec2.4} for equal sample sizes $n=m$. The  Wald test is based on the Wald type statistic, the ANOVA Test on the ANOVA-type statistic, the max t test on the maximum of all univariate test statistics and the percentile test on the empirical bootstrap quantiles with a Bonferroni correction.  The simulation setting is described in Example \ref{ex4} for  dimension \(d=2\).}
\label{tab:rejection_rates1}
\end{table}

\begin{table}[htb]
\centering
\begin{tabular}{cccccc}
\toprule
\textbf{Sample Size} & \textbf{Wald} & \textbf{ANOVA} & \textbf{Max T} & \textbf{Percentile} \\
\midrule
10  & 0.079 & 0.035 & 0.000 & 0.000 \\
20  & 0.057 & 0.023 & 0.000 & 0.000 \\
30  & 0.064 & 0.038 & 0.000 & 0.000 \\
40  & 0.050 & 0.027 & 0.001 & 0.000 \\
50  & 0.046 & 0.038 & 0.000 & 0.000 \\
60  & 0.067 & 0.046 & 0.000 & 0.000 \\
70  & 0.044 & 0.033 & 0.000 & 0.000 \\
80  & 0.052 & 0.040 & 0.001 & 0.000 \\
90  & 0.048 & 0.042 & 0.001 & 0.000 \\
100 & 0.048 & 0.040 & 0.001 & 0.000 \\
\bottomrule
\end{tabular}
\caption{Empirical sizes for the test introduced in Section \ref{sec2.4} for equal sample sizes $n=m$. The  Wald test is based on the Wald type statistic, the ANOVA Test on the ANOVA-type statistic, the max t test on the maximum of all univariate test statistics and the percentile test on the empirical bootstrap quantiles with a Bonferroni correction.  The simulation setting is described in Example \ref{ex4} for dimension \(d=5\).}
\label{tab:rejection_rates_updated}
\end{table}

\begin{example}  \label{ex5}
{\bf (Power under Heterogeneous Variance)} Here, we consider distributions with mean vector \( \bm \mu^{(g)} = \mathbf{1} \in \mathbb{R}^d \).  We set the variances in  group 1 to \( \sigma^{(1)} = 1 \), and investigate the power performance of the tests as the variance in group 2,   \( \sigma^{(2)} \), increases from $1.1$ to $2.0$. Sample sizes are set to \( n = m = 100 \) and dimensions \( d = 2 \) and \( d = 5 \) are considered.

It is clear from Figure \ref{fig:power_sigma}  that the powers of  Wald and ANOVA tests  increase  as either of the shift in the variance  or dimensionality increase. Max T and Percentile tests remain less sensitive and require larger sample sizes to detect differences.
\end{example}
\begin{figure}[htb]
    \centering
    \includegraphics[width=0.8\textwidth, page=1]{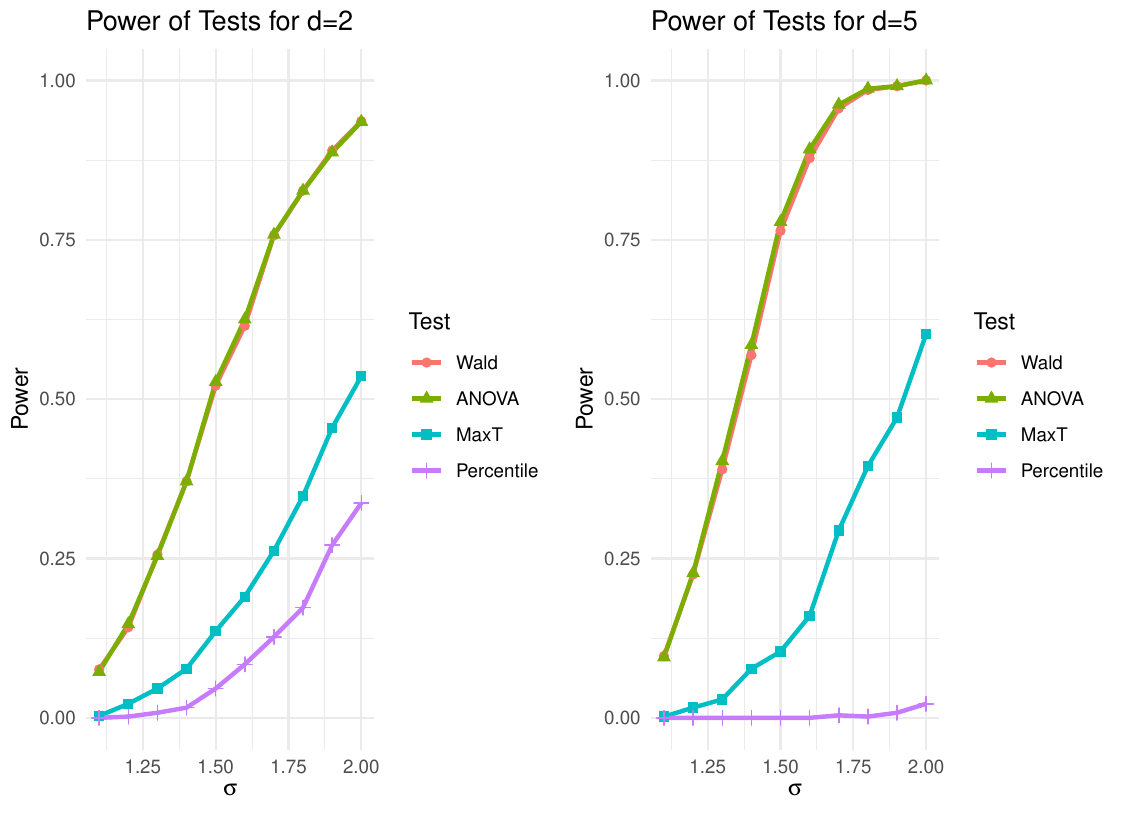}
    \caption{Empirical power for the tests from Section \ref{sec2.4}, based on $n=m=100$. The x-axis shows the standard deviation in group 2. Results are shown for dimensions $d=2$ and $d=5$, as described in Example \ref{ex5}.}
    \label{fig:power_sigma}
\end{figure}

\begin{example} \label{ex6}
{\bf (Power by Sample Size)} To study the effect of sample size on power, we fix the variance constant with \( \sigma^{(1)} = 1 \) and \( \sigma^{(2)} = 2 \) for all components, and vary the sample size. 

As shown in Figure \ref{fig:power_sample_size}, power of the Wald and ANOVA tests increases with sample size and is generally higher in higher dimensions. The Max T and Percentile tests remain overly conservative with limited power gains.
\end{example}
% Figure 2: ex_4.6.pdf
\begin{figure}[htb]
    \centering
    \includegraphics[width=0.8\textwidth, page=1]{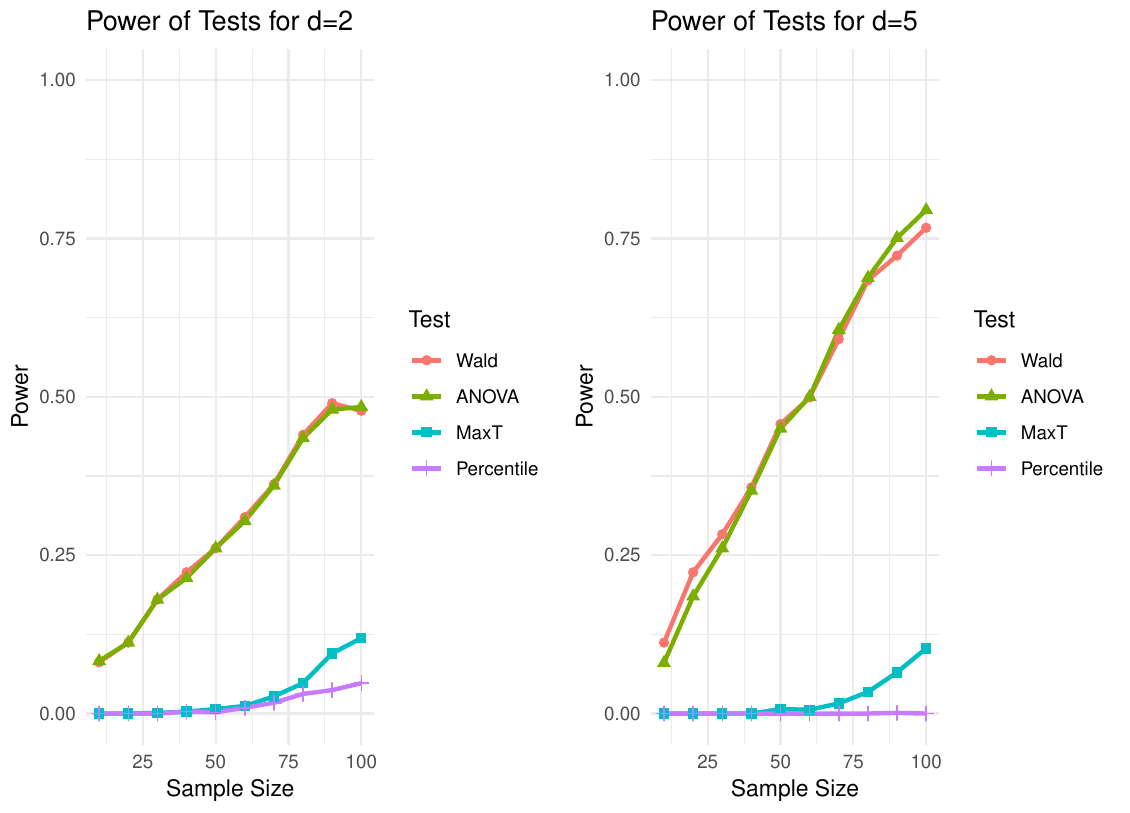}
    \caption{Empirical power for the tests from Section \ref{sec2.4}, for increasing sample sizes and fixed variances \( \sigma^{(1)} = 1 \), \( \sigma^{(2)} = 2 \). Results are shown for $d=2$ and $d=5$, as described in Example \ref{ex6}.}
    \label{fig:power_sample_size}
\end{figure}

\section{Proofs} \label{s3}
\subsection{Proof of Theorem \ref{th:asymptotic} } \label{A1}
There exists sequences of two independent Brownian bridges, $B_{1,n}^{(s)}(u)$ and $B_{2,n}^{(s)}(u)$,  $u \in [0,1]$, such that under Assumptions \ref{assum_1} and \ref{assum_2},
\begin{align} \label{thm_22}
& \sqrt{m}( \hat{F}^{(s)}_{n}((\hat{G}^{(s)}_{m})^{-1}(u))- F^{(s)}((G^{(s)})^{-1}(u)))  \\
&=  \sqrt{\nu} \ B_{1,N}^{(s)}(F^{(s)}((G^{(s)})^{-1}(u))) + \frac{f^{(s)}((G^{(s)})^{-1}(u))}{g^{(s)}((G^{(s)})^{-1}(u))}   B_{2,N}^{(s)}(u) \notag \\
&\quad + o((N)^{-1/2}(\log N)^2) \notag
\end{align}
almost surely for all $s=1, \ldots,d$, and this holds uniformly over any subinterval $[a,b]$ of $[0,1]$ \cite[][Theorem 2.2]{hsieh_turnbull}. Since $a$ and $b$ are selected without restriction, the relationship remains valid over all such subintervals.

Let $\mathbb{D}=D[0,1]$ be the Skorokhod space, i.e., the space of all C\`adl\`ag functions on $[0,1]$, and $\mathbb{E}\subset \mathbb{R}$. Consider the map $\phi: \mathbb{D} \rightarrow \mathbb{E}$ defined by \citep{ParkinsonKutilKupplerJunkerTrutschnigBathke+2018, beck2023combining} 
\begin{align} \label{phi}
 \phi(f^{(s)})=  \int_0^1 f^{(s)} \biggl(1- \frac{\alpha}{2} \biggl )d \alpha- \int_0^1 f^{(s)} \biggl(\frac{\alpha}{2} \biggl) d \alpha.
\end{align}   
Clearly, $\phi$ is a continuous linear map and  $$\phi(\hat{F}^{(s)} \circ (\hat{G}^{(s)})^{-1}) = I(\hat{F}^{(s)}, \hat{G}^{(s)}).$$ 
To generalize this to the multivariate situation, define now $\Phi: \mathbb{D}^d\to \mathbb{R}^d$ by
\begin{align} \label{Phi}
  \Phi (f^{(1)}, \ldots, f^{(d)} ) = (\phi(f^{(1)}), \ldots,\phi( f^{(d)} ))
\end{align}
and 
\begin{align*}
   \widehat{\mathcal{I}}_{n,m}(\mathbf{F},\mathbf{G})=\Phi(\hat{F}^{(1)} \circ (\hat{G}^{(1)})^{-1}, \ldots,\hat{F}^{(d)} \circ (\hat{G}^{(d)})^{-1}) 
\end{align*}

Since $\phi$ is continuously Hadamard-differentiable \citep[][Theorem 20.9]{vaart_1998}, so is $\Phi$ by chain rule. Then, the desired result follows by the Cr\`amer-Wold device and the delta method for empirical processes \citep[][Theorem 3.9.4]{van1996weak}. %
\subsection{Proof of Theorem \ref{thmboot} } \label{A2} 
  The conditional central limit theorem holds in outer probability for each bootstrapped empirical distribution function,
\begin{align*}
    \sqrt{n}(\hat{F}^*_n-\hat{F}_n) \ \ \ \text{and} \ \ \ \sqrt{m}(\hat{G}^*_m-\hat{G}_m),
\end{align*}
by Theorem 3.6.1 in \cite{van1996weak}. 

 We consider now the map
\[
\psi(F, G) = F \circ G^{-1}
\]
for distribution functions \( F \) and \( G \) defined on \( \mathbb{R} \). This transformation is Hadamard-differentiable tangentially to \( D[a, b] \times C[a, b] \)  \citep[][Comment 4 in Section 3.10]{vaart2023statistical}. Then the composition $\phi \circ \psi$, where $\phi$ is defined as in \eqref{phi}, is Hadamard-differentiable \citep[][Theorem 20.9]{vaart_1998} by chain rule.
  Again considering the multivariate map $\Phi$ as defined in \eqref{Phi}, the composition of these three maps are again Hadamard-differentiable.
 The desired result follows then directly by applying the delta method for empirical bootstrap processes \citep[][Theorem 3.9.11]{van1996weak} and the Cr\`amer-Wold device.
 \section{Additional Information on the Data Example} \label{s4}
For our motivating fish data example we provide here additional information on the structure of the data set. First we show a pairwise scatterplot matrix, where the diagonal cells show density plots for each group. Each upper-triangle cell contains the  Pearson correlation coefficient. Each lower-triangle cell is a scatter plot comparing two variables. 
\begin{figure}[h]
    \centering
    \includegraphics[width=\textwidth]{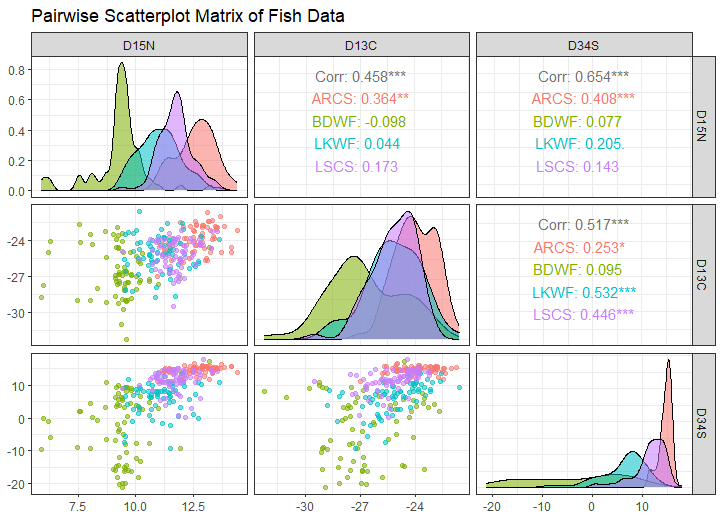}
    \caption{Pairwise Scatterplot Matrix of Fish Data}
    \label{fig:scatterplot_fish}
\end{figure}
The qqplots show that the assumption of multivariate normailty can not be justified.
\begin{figure}[h]
    \centering
    \includegraphics[width=\textwidth]{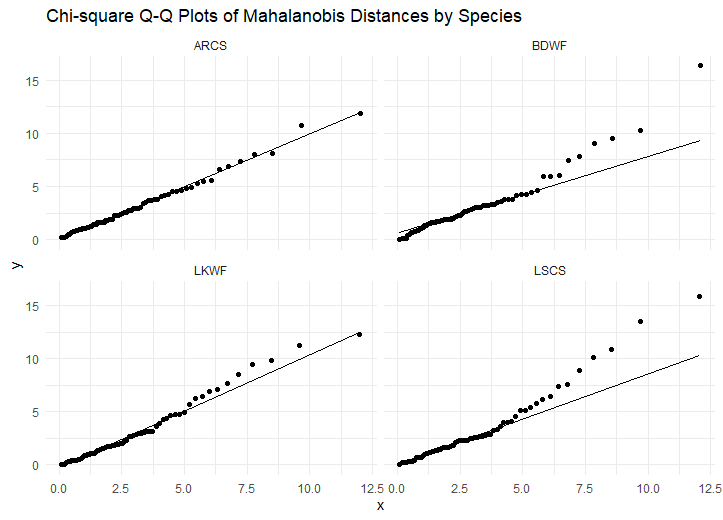}
    \caption{Chi-square Q-Q Plots of Mahalanobis Distances by Species}
    \label{fig:qqplot_fish}
\end{figure}

\end{document}